\newtheorem{lem}{Lemma}
\newtheorem{theorem}{Theorem}
\theoremstyle{definition} \newtheorem{defn}{Definition}
\newcommand{\ket}[1]{| {#1} \rangle}
\begin{document}

\title{Logical operator tradeoff for local quantum codes}
\author{Jeongwan Haah and John Preskill}
\affiliation{Institute for Quantum Information and Matter, California Institute of Technology, Pasadena, California 91125, USA}
\date{2 July 2012}

\begin{abstract}
We study the structure of logical operators in local $D$-dimensional quantum codes, considering both subsystem codes with geometrically local gauge generators and codes defined by geometrically local commuting projectors. We show that if the code distance is $d$, then any logical operator can be supported on a set of specified geometry containing $\tilde d$ qubits, where $\tilde d d^{1/(D-1)} = O(n)$ and $n$ is the code length. Our results place limitations on partially self-correcting quantum memories, in which at least some logical operators are protected by energy barriers that grow with system size.
We also show that for any two-dimensional local commuting projector code there is a nontrivial logical ``string'' operator supported on a narrow strip, where the operator is only slightly entangling across any cut through the strip. 
\pacs{03.67.Pp, 03.67.Lx}
\end{abstract}
\maketitle
\section{Introduction}

Geometrically local quantum codes provide intriguing models of quantum many-body physics, and also have potential applications to fault-tolerant quantum computation in systems with short-range interactions. There has been impressive recent progress in understanding the properties of such codes. 
Bravyi, Poulin, and Terhal \cite{BravyiPoulinTerhal2010Tradeoffs} showed that for codes defined by geometrically local commuting projectors in $D$ dimensions, the code length $n$, distance $d$ and number of encoded qubits $k$ are related by 
\[
kd^{2/(D-1)} = O(n).
\]
Bravyi and Terhal \cite{BravyiTerhal2008no-go} showed that
\[
d = O(n^{(D-1)/D})
\]
for subsystem codes with geometrically local gauge generators, and
Bravyi \cite{Bravyi2010Subsystem} showed that
\[
kd = O(n)
\]
for two-dimensional subsystem codes with geometrically local gauge generators. 

Bravyi and Terhal \cite{BravyiTerhal2008no-go}, and Kay and Colbeck \cite{KayColbeck2008Quantum}, also showed that no two-dimensional local stabilizer code can be a \emph{self-correcting quantum memory} --- if we regard the code as a system governed by a local Hamiltonian, the energy barrier protecting against logical errors is a constant independent of system size. A self-correcting memory based on a geometrically local stabilizer code is possible in four dimensions \cite{DennisKitaevLandahlEtAl2002Topological,AlickiHorodeckiHorodeckiEtAl2008thermal}, where the storage time increases sharply as the system size grows. In three dimensions there are codes  such that the energy barrier increases logarithmically with system size \cite{Haah2011Local,BravyiHaah2011Energy}, but where the storage time is bounded above by a constant independent of system size \cite{BravyiHaah2011Analytic}.


We 
address a related but somewhat different question. To illustrate the question, consider the three-dimensional toric code \cite{CastelnovoChamon2008Topological}, on a cubic lattice with linear size $L$. This code provides different degrees of protection against different types of errors. For example, we can arrange for the logical bit flip acting on the code space to have weight $L$ (\emph{i.e.}, to be supported on a set of $L$ qubits), while the logical phase flip has weight $L^2$. In that case, the energy barrier protecting against logical phase errors grows linearly with $L$, though the energy barrier protecting against bit flips is only a constant. We might say this system is \emph{partially self correcting}, meaning it has very robust physical protection against phase errors, but weaker protection against bit flips. 

We find limitations on partial self correction in two-dimensional local subsystem codes with local stabilizer generators; in particular the logical phase flip must have weight $O(L)$ if the logical bit flip has weight $\Omega(L)$. More generally, we study how the code distance $d$ constrains the weight of logical operators, for both local commuting projector codes and subsystem codes, finding that $d$ limits not just the weight of the lowest-weight logical operator but also the higher-weight logical operators. Let us say that two logical operators are \emph{equivalent} if they act in the same way on the protected system. Our result, which applies to both local subsystem codes and to local commuting projector codes in $D\ge 2$ dimensions, says that for any logical operator there is an equivalent logical operator with weight $\tilde d$ such that
\begin{equation}\label{eq:main-result}
\tilde d d^{1/(D-1)} = O(L^D)
\end{equation}
where $L$ is the linear size of the lattice. We call this result the tradeoff theorem for logical operators, since, \emph{e.g.}, increasing the weight of the lowest-weight logical operator reduces the upper bound on the weight of other logical operators. One immediate consequence is that, since $d \le \tilde d$,
\[
d = O(L^{D-1}),
\]
a result previously known for local subsystem codes but not for local commuting projector codes with $D\ge 3$. For $D=2$ the tradeoff becomes $d\tilde d = O(L^2)$, and hence $d=O(L)$. 

We also show that for any two-dimensional local commuting projector code there is a nontrivial logical ``string'' operator supported on a narrow strip (or on a narrow slab in higher dimensions), where the operator is only slightly entangling across any cut through the strip. However, we have not settled the question whether two-dimensional local commuting projector codes can be self correcting.

We review the theory of stabilizer codes and subsystem codes in Sec. II. In Sec. III we prove a ``Cleaning Lemma'' for subsystem codes previously stated by Bravyi \cite{Bravyi2010Subsystem}; our proof uses tools developed by Yoshida and Chuang \cite{YoshidaChuang2010Framework}, and may be of independent interest. We prove the tradeoff theorem for local subsystem codes in Sec. IV and for local commuting projector codes in Sec. V. In Sec. VI we show that any two-dimensional commuting projector code admits a nontrivial logical ``string'' operator supported on a narrow strip. 
In Sec. VII we explain why partial self-correction is impossible for two-dimensional local stabilizer codes  with distance $d=\Omega(L)$. 
In Sec. VIII we show that the logical string operator in a two-dimensional local commuting projector code can be chosen to be slightly entangling across any cut through the string.
Sec. IX contains our conclusions. 

\section{Background: stabilizer  and subsystem codes}
\label{sec:background}

A \emph{stabilizer code}  \cite{CalderbankRainsShorEtAl1997Quantum,Gottesman1996Class} embeds $k$ protected qubits in the Hilbert space of $n$ physical qubits. The code has a stabilizer group $S$, an abelian subgroup with $n-k$ independent generators of the $n$-qubit Pauli group $P$, and the code is the simultaneous eigenspace with eigenvalue 1 of all elements of $S$.

It is convenient to abelianize $P$ by ignoring the phase in the product of two Pauli operators, thus obtaining a $2n$-dimensional vector space over the binary field, which we also denote by $P$. The vector space $P$ is equipped with a symplectic form, such that two vectors are orthogonal if and only if the corresponding Pauli operators commute. If $G$ is a subgroup of $P$, we use the symbol $G$ to denote both the subgroup and the corresponding subspace of $P$.

Viewed as a vector space, $S$ is $(n-k)$-dimensional. We denote by $S^\perp$ the vector space orthogonal to $S$, which has dimension $2n - (n-k)$ = $n+k$. It can be decomposed as a direct sum of $S$ and a $2k$ dimensional vector space corresponding to the logical Pauli group, which acts nontrivially on the $k$ protected qubits. We define the weight of a Pauli operator as the number of qubits on which the operator acts nontrivially, and the distance $d$ of the stabilizer code is the minimum weight of a nontrivial logical operator (one contained in $S^\perp$ but not in $S$).

A \emph{subsystem code} \cite{Bacon2006Operator,Poulin2005Stabilizer} can be viewed as a stabilizer code with $k+g$ encoded qubits, but where only $k$ of these qubits are used to store protected quantum information. The stabilizer group $S$ together with Pauli operators acting on the $g$ unused qubits generate the code's \emph{gauge group} $G$. Equivalently, we may say that the subsystem code is defined by its gauge group $G\le P$, and that the code's stabilizer group $S=G\cap G^\perp$ is the subgroup of $G$ that commutes with all elements of $G$.

Logical operations in the subsystem code preserve the $2^k$-dimensional Hilbert space spanned by the $k$ protected qubits. We distinguish between \emph{bare} logical operators, which act trivially on the gauge qubits, and \emph{dressed} logical operators, which may act nontrivially on the gauge qubits as well as the protected qubits. Thus, nontrivial bare logical operators are in $G^\perp$ but not in $G$, while nontrivial dressed logical operators are in $S^\perp$ but not in $G$. The code distance $d$ is the minimum weight of a nontrivial dressed logical operator. 

A bare logical operator  $x\in G^\perp$ acts trivially on the protected qubits as well as the gauge qubits if and only if $x\in G^\perp \cap G= S$; hence we may regard $G^\perp/S$ as the group of bare logical operators. A dressed logical operator $x\in S^\perp$ acts trivially on the protected qubits (but perhaps nontrivially on the gauge qubits) if and only if  $x\in G$; hence we may regard $S^\perp / G$ as the group of dressed logical operators, where we regard two dressed logical operators as equivalent if they act the same way on the protected qubits. We denote by $[G]$ the dimension of the vector space $G$ (the number of independent generators of the corresponding group); by counting the number of independent bare logical operators, we find that the number $k$ of protected qubits satisfies
\begin{eqnarray*}
2k &=& [G^\perp/S]=[G^\perp] - [S] \\
 &=& [P] - [G] - [S]= 2n - [G] - [S].
\end{eqnarray*}
Similarly, by counting the number of independent dressed logical operators, we find
\begin{eqnarray*}
2k &=& [S^\perp/G]=[S^\perp] - [G] \\
 &=& [P] - [S] - [G]= 2n - [S] - [G].
\end{eqnarray*}
A stabilizer code is the special case of a subsystem code in which $G=S$, and in that case, $k = n - [S]$.

We will also consider stabilizer codes and subsystem codes of the CSS type \cite{CalderbankShor1996Good,Steane1996Multiple}, where each generator of the gauge group, and each logical operator, may be chosen to be either of the $X$-type or the $Z$-type. We use $P^X$ ($P^Z$) to denote the group of $X$-type ($Z$-type) Pauli operators, $G^X$ ($G^Z$) to denote the $X$-type ($Z$-type) gauge group, and $S^X$ ($S^Z$) to denote the $X$-type ($Z$-type) stabilizer group. We use ($G^X)^\perp$ to denote the subgroup of $P^Z$ that commutes with $G^X$, etc. Then the group of bare $Z$-type logical operators is $(G^X)^\perp/S^Z$ and the group of bare $X$-type logical operators is $(G^Z)^\perp/S^X$. Therefore the number $k$ of protected qubits is
\begin{eqnarray*}
k &=& [(G^X)^\perp/S^Z]=  n - [G^X] - [S^Z],\\
k &=& [(G^Z)^\perp/S^X]= n - [G^Z] - [S^X].
\end{eqnarray*}

We wish to study stabilizer codes in which the stabilizer generators are geometrically local and subsystem codes in which the gauge generators are geometrically local. To be concrete, we may imagine that the qubits reside at the vertices of a $D$-dimensional hypercubic lattice (with either open or periodic boundary conditions), and that each generator acts nontrivially only inside a hypercube (containing $w^D$ vertices) with linear size $w$. In fact our results can be easily extended to codes with geometrically local generators defined on any graph embedded in $D$-dimensional space. Note that for a subsystem code the stabilizer generators might be nonlocal even if the gauge generators are local. Some of our results also apply to a larger class of local codes that includes local stabilizer codes. For this class, which we call \emph{local commuting projector codes}, the code space is the simultaneous eigenspace with eigenvalue one of a set of mutually commuting geometrically local projection operators, where the projectors do not necessarily project onto eigenspaces of Pauli operators. A local stabilizer code, but not a local subsystem code, is a special case of a local commuting projector code. 

\section{Cleaning lemma for subsystem codes}

The Cleaning Lemma for subsystem codes relates the number of independent bare logical operators supported on a set of qubits $M$ to the number of independent dressed logical operators supported on the complementary set $M^c$. The concept of the Cleaning Lemma was introduced in \cite{BravyiTerhal2008no-go}, then generalized in \cite{YoshidaChuang2010Framework} and  \cite{Bravyi2010Subsystem}. Here we use ideas from \cite{YoshidaChuang2010Framework} to prove a version stated in \cite{Bravyi2010Subsystem}.
(See also \cite{WildeFattal2009Nonlocal}.)
As in the Sec. \ref{sec:background}, we will regard a subgroup of the Pauli group as a vector space, allowing us to obtain the Cleaning Lemma from straightforward dimension counting.

We use $P_A$ to denote the subgroup of the Pauli group $P$ supported on a set $A$ of qubits; likewise for any subgroup $G$ of the Pauli group $G_A= G \cap P_A$, is the subgroup of $G$ supported on $A$.
We denote by $\Pi_A : P \to P_A$ the restriction map that maps a Pauli operator to its restriction supported on the set $A$, and we use $|A|$ to denote the number of qubits contained in $A$; thus $[P_A] = 2|A|$.

If we divide $n$ qubits into two complementary sets $A$ and $B$, then a subgroup $G$ of $P$ can be decomposed into $G_A$, $G_B$, and a ``remainder,'' as follows:

\begin{lem}\emph{(Decomposition of Pauli subgroups)}
Suppose that $A$ and $B$ are complementary sets of qubits. Then for any subgroup $G$ of the Pauli group,
\begin{equation*}
 G = G_A \oplus G_B \oplus G'
\end{equation*}
for some $G'$, where
\begin{align*}
 [ (G^\perp)_A ] &= 2|A| - [G_A] - [G'] ,\\
 [ (G^\perp)_B ] &= 2|B| - [G_B] - [G']
\end{align*}
\end{lem}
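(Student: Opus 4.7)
The plan is to treat $P$ as a symplectic $\mathbb{F}_2$-vector space and obtain both the decomposition and the dimension formulas by linear algebra, exploiting the fact that the symplectic form splits as an orthogonal sum of the forms on $P_A$ and $P_B$.

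First I would establish the decomposition. Since $A$ and $B$ are complementary, any Pauli in $P_A$ acts on disjoint qubits from any Pauli in $P_B$, so $P_A \cap P_B = 0$ and hence $G_A \cap G_B = 0$. Thus $G_A + G_B = G_A \oplus G_B$ sits inside $G$ as a subspace, and I simply pick any vector-space complement $G'$, giving $G = G_A \oplus G_B \oplus G'$. The complement is not canonical, but its dimension is forced: $[G'] = [G] - [G_A] - [G_B]$, so the right-hand sides of the claimed formulas are well defined.

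Next I would reduce $(G^\perp)_A$ to a computation inside $P_A$. For any $g\in G$, write $g = \Pi_A(g) + \Pi_B(g)$; if $x\in P_A$, then $x$ and $\Pi_B(g)$ have disjoint support and so commute automatically. Hence $x\in (G^\perp)_A$ if and only if $x$ is symplectically orthogonal, inside $P_A$, to every element of $\Pi_A(G)$. Writing $\perp_A$ for the symplectic complement taken in $P_A$, this gives $(G^\perp)_A = (\Pi_A(G))^{\perp_A}$. Because the symplectic form on $P_A$ is non-degenerate and $[P_A]=2|A|$,
\[
[(G^\perp)_A] = 2|A| - [\Pi_A(G)].
\]

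The remaining step, which is the only place any real work is needed, is to identify $[\Pi_A(G)]$ using the decomposition. Clearly $\Pi_A(G_A) = G_A$ and $\Pi_A(G_B) = 0$, so $\Pi_A(G) = G_A + \Pi_A(G')$. I would then check two things. (i) $\Pi_A$ is injective on $G'$: its kernel on $G$ is $G\cap P_B = G_B$, and $G'\cap G_B = 0$ by directness of the sum, so $\Pi_A|_{G'}$ is injective and $[\Pi_A(G')]=[G']$. (ii) $G_A \cap \Pi_A(G') = 0$: if $g\in G'$ and $\Pi_A(g)\in G_A\subseteq G$, then $\Pi_B(g) = g - \Pi_A(g)\in G\cap P_B = G_B$, so $g\in G_A\oplus G_B$, which combined with $g\in G'$ forces $g=0$. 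Together these give $[\Pi_A(G)] = [G_A] + [G']$, and substituting yields $[(G^\perp)_A] = 2|A| - [G_A] - [G']$. The same argument with $A$ and $B$ interchanged gives the formula for $[(G^\perp)_B]$.

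The main obstacle, such as it is, lies in steps (i) and (ii) above—making sure the projection $\Pi_A$ restricted to the chosen complement $G'$ is injective and that its image meets $G_A$ trivially, so that dimensions add. Once those two easy observations are in hand the rest is just the dimension formula $\dim W + \dim W^{\perp_A} = 2|A|$ applied to $W = \Pi_A(G)$.
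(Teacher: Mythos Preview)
Your proof is correct and follows essentially the same approach as the paper: both establish the decomposition by extending $G_A\oplus G_B$ to a basis of $G$, then compute $[(G^\perp)_A] = 2|A| - [\Pi_A G]$ by counting symplectic constraints, and finally show $[\Pi_A G] = [G_A] + [G']$. The only cosmetic difference is that the paper packages your steps (i) and (ii) into the single observation that $\Pi_A : G_A \oplus G' \to \Pi_A G$ is injective (its kernel lies in $G_B \cap (G_A \oplus G') = 0$), whereas you split this into injectivity on $G'$ plus $G_A \cap \Pi_A(G') = 0$.
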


\begin{proof}
If $V$ is a vector space and $W$ is a subspace of $V$, then there is a vector space $V'$ such that $V=W\oplus V'$; we may choose $V'$ to be the span of the basis vectors that extend a basis for $W$ to a basis for $V$. 
Since $G_A$ and $G_B$ are disjoint, i.e., $G_A \cap G_B = \{0\}$, $G_A\oplus G_B$ is a subspace of $G$, and thus there exists an auxiliary vector space $G' \leq G$ such that
\[
 G = G_A \oplus G_B \oplus G'.
\]
The choice of $G'$ is not canonical, but we need only its existence. Since the restriction map $\Pi_A$ obviously annihilates $G_B$, we may regard it as a map from $G_A\oplus G'$ onto $\Pi_A G$. In fact this map is injective. Note that if $\Pi_A x = 0$ for some $x \in G_A\oplus G'$. then since $P=P_A\oplus P_B$ it must be that $x \in G_B$. But because the sum is direct, i.e. $G_B \cap (G_A\oplus G') = \{0\}$, it follows that $x = 0$, which proves injectivity. Hence $\Pi_A: G_A\oplus G'\to \Pi_A G$ is an isomorphism. Now, we may calculate $(G^\perp)_A$ by solving a system of linear equations. Noting that $x \in P_A$ is contained in $G^\perp$ if and only if $x$ commutes with the restriction to $A$ of each element of $G$, we see that the number of independent linear constraints is $[\Pi_A G] = [G_A] + [G']$; hence $[(G^\perp)_A]=[P_A] - [G_A] - [G']= 2|A| - [G_A] - [G']$. Likewise, $\Pi_B: G_B\oplus G'\to \Pi_B G$ is also an isomorphism, and hence $[(G^\perp)_B]=[P_B] - [G_B] - [G']= 2|B| - [G_B] - [G']$.
\end{proof}

Now we are ready to state and prove the Cleaning Lemma. For a subsystem code, let $g_{\rm bare}(M)$ be the number of independent non-trivial bare logical operators supported on $M$, and let $g(M)$ be the number of independent non-trivial dressed logical operators supported on $M$, i.e.,
\begin{align*}
 g_{\rm bare}(M) &= [G^\perp \cap P_M / S_M ] = [(G^\perp)_M/S_M], \\
 g(M)        &= [S^\perp \cap P_M / G_M ]= [(S^\perp)_M/G_M].
\end{align*}
Likewise, for a CSS subsystem code, let $g_{\rm bare}^X(M)$ be the number of independent non-trivial bare $X$-type logical operators supported on $M$, and let $g^X(M)$ be the number of independent non-trivial dressed $X$-type logical operators supported on $M$, i.e.,
\begin{align*}
 g_{\rm bare}^X(M) &= [(G^Z)^\perp \cap P^X_M / S^X_M], \\
 g^X(M)        &= [(S^Z)^\perp \cap P^X_M / G^X_M],
\end{align*}
and similarly for the $Z$-type logical operators.

\begin{lem}\emph{(Cleaning Lemma for subsystem codes)}
For any subsystem code, we have
\begin{equation}
 g_{\rm bare}(M) + g(M^c) = 2k ,
\end{equation}
where $M$ is any set of qubits and $M^c$ is its complement. 
Moreover, for a CSS subsystem code
\begin{equation}
g_{\rm bare}^X(M) + g^Z(M^c) = k = g_{\rm bare}^Z(M) + g^X(M^c).
\end{equation}
\label{lem:counting_op}
\end{lem}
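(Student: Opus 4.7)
The plan is to reduce the Cleaning Lemma to a purely dimension-counting exercise, using the Decomposition Lemma as the sole technical input. I would apply that lemma twice with the partition $(M, M^c)$: once to the gauge group $G$, producing $G = G_M \oplus G_{M^c} \oplus G'$ with $[(G^\perp)_M] = 2|M| - [G_M] - [G']$, and once to the stabilizer group $S$, producing $S = S_M \oplus S_{M^c} \oplus S'$ with $[(S^\perp)_{M^c}] = 2|M^c| - [S_{M^c}] - [S']$. Additivity of dimensions in a direct sum also gives $[G'] = [G] - [G_M] - [G_{M^c}]$ and $[S'] = [S] - [S_M] - [S_{M^c}]$.

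Next I would expand the definitions $g_{\rm bare}(M) = [(G^\perp)_M] - [S_M]$ and $g(M^c) = [(S^\perp)_{M^c}] - [G_{M^c}]$, substitute the formulas above, and use $|M|+|M^c|=n$. The $G$-contributions collect into $-[G_M] - [G_{M^c}] - [G'] = -[G]$, and similarly the $S$-contributions collect into $-[S]$, leaving
\[
g_{\rm bare}(M) + g(M^c) = 2n - [G] - [S] = 2k,
\]
where the last equality is the dimension formula for $k$ recorded in Sec.~\ref{sec:background}.

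For the CSS version I would observe that the proof of the Decomposition Lemma relies only on (i) the restriction map being an isomorphism from $G_A \oplus G'$ onto $\Pi_A G$, and (ii) commutation with $G$ on $A$ being equivalent to commutation with $\Pi_A G$. Both remain valid when the ambient space is $P^X$ and the subgroup is $G^Z$, except that $[P^X_A] = |A|$ rather than $2|A|$. Applying this adapted formula to $G^Z$ and $S^X$ and repeating the bookkeeping yields $g_{\rm bare}^X(M) + g^Z(M^c) = n - [G^Z] - [S^X] = k$; the other identity follows by the obvious $X$--$Z$ symmetry. The main obstacle, such as it is, is really just making the right initial choice of what to decompose: one must decompose $G$ and $S$ themselves (rather than their perps) so that the final cancellation produces exactly the combination $[G]+[S]$ matching $2n-2k$.
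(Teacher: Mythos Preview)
Your proposal is correct and follows essentially the same route as the paper: apply the Decomposition Lemma to $G$ (to evaluate $[(G^\perp)_M]$) and to $S$ (to evaluate $[(S^\perp)_{M^c}]$), substitute into the definitions of $g_{\rm bare}(M)$ and $g(M^c)$, and collect terms so that the $G$-pieces and $S$-pieces reassemble into $-[G]$ and $-[S]$, yielding $2n - [G] - [S] = 2k$. The CSS argument is likewise the same as the paper's, with the only adjustment being the halved dimension count $[P^X_A]=|A|$.
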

\begin{proof}
We use Lemma 1 to prove the Cleaning Lemma by a direct calculation:
\begin{align*}
g_{\rm bare}(M)
&= [(G^\perp)_M  / S_M] \\
&= 2|M| - [G_M] - [G'] - [S_M] ,
\end{align*}
and
\begin{align*}
g(M^c) 
&= [(S^\perp)_{M^c} / G_{M^c}] \\
&= 2|M^c| - [S_{M^c}] - [S'] - [G_{M^c}] .
\end{align*}
Summing, we find
\begin{align*}
g_{\rm bare}(M) + g(M_c) 
&= 2|M| + 2|M_c| \\
&-([G_M] + [G_{M_c}] + [G'])\\
&-([S_M] + [S_{M_c}] + [S'])
\end{align*}
and invoking Lemma 1 once again,
\begin{align*}
g_{\rm bare}(M) + g(M_c) 
&= 2n -[G] - [S] = 2k ,
\end{align*}
which proves the claim for general subsystem codes.
For the CSS case, we apply the analogue of Lemma 1 to the $X$-type and $Z$-type Pauli operators, finding
\begin{align*}
g^Z_{\rm bare}(M)
&= [ (G^X)^\perp \cap P^Z_{M} / S^Z_{M} ] \\
&= |M| - [G^X_{M}] - [(G^X)'] - [S^Z_{M}]
\end{align*}
and also 
\begin{align*}
g^X(M^c)
&= [ (S^Z)^\perp \cap P^X_{M^c} / G^X_{M^c} ] \\
&= |M^c| - [S^Z_{M^c}] - [(S^Z)'] - [G^X_{M^c}]. 
\end{align*}
Summing and using Lemma 1 we have
\begin{align*}
g_{\rm bare}^Z(M) + g^X(M^c) 
&= n - [G^X]-[S^Z] =k ;
\end{align*}
a similar calculation yields 
\begin{align*}
g_{\rm bare}^X(M) + g^Z(M^c) 
&= n - [G^Z]-[S^X] =k ,
\end{align*}
proving the claim for CSS subsystem codes.
\end{proof}

Of course, for a stabilizer code there is no distinction between bare and dressed logical operators; the statement of the Cleaning Lemma becomes
\[
g(M) + g(M^c) = 2k
\]
for general stabilizer codes, and
\[
g^X(M) + g^Z(M^c) = k
\]
for CSS stabilizer codes.

To understand how the Cleaning Lemma gets its name, note that it implies that if no bare logical operator can be supported on the set $M$ then all dressed logical operators can be supported on its complement $M^c$. That is, any of the code's dressed logical Pauli operators can be ``cleaned up'' by applying elements of the gauge group $G$. The cleaned operator acts the same way on the protected qubits as the original operator (though it might act differently on the gauge qubits), and acts trivially on $M$. 

We say that a region $M$ is \emph{correctable} if erasure of the qubits in $M$ is a correctable error. For a subsystem code, it follows that no nontrivial dressed logical operators are supported on $M$ if $M$ is correctable; hence $g(M)=0$ and thus $g_{\rm bare}(M)=0$. The Cleaning Lemma then asserts that all dressed logical operators can be supported on $M^c$. Let us say that two dressed logical operators $x$ and $y$ are \emph{equivalent} if $x=yz$ and $z$ is an element of the gauge group $G$, so that $x$ and $y$ act the same way on the protected qubits. We have obtained:
\begin{lem}\emph{(Cleaning Lemma for dressed logical operators)}
\label{lem:clean-region}
For any subsystem code, if $M$ is a correctable region and $x$ is a dressed logical operator, then there is a dressed logical operator $y$ supported on $M^c$ that is equivalent to $x$.
\end{lem}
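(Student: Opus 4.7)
The plan is to deduce Lemma 3 directly from the counting statement in Lemma 2 together with the correctability hypothesis; this essentially turns the remark in the paragraph above the lemma into a formal argument. First I would unpack the hypothesis. To say that $M$ is correctable is, by definition, to say that erasure of the qubits of $M$ is a correctable error, which for a subsystem code is equivalent to the statement that no nontrivial dressed logical operator is supported on $M$. In the notation introduced just above the lemma this is exactly $g(M)=0$.

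Next I would show that $g(M)=0$ forces $g_{\rm bare}(M)=0$. There is a natural map
\[
 (G^\perp)_M/S_M \longrightarrow (S^\perp)_M/G_M, \qquad [x]_{S_M}\mapsto [x]_{G_M},
\]
which is well defined because $G^\perp\subseteq S^\perp$ and $S_M\subseteq G_M$. It is also injective: if $x\in (G^\perp)_M$ actually lies in $G_M$, then $x\in G\cap G^\perp = S$, so $x\in S_M$ and its class $[x]_{S_M}$ is already trivial. Therefore $g_{\rm bare}(M)\le g(M)=0$, so $g_{\rm bare}(M)=0$.

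With $g_{\rm bare}(M)=0$ in hand I would invoke the first identity of Lemma 2 to conclude $g(M^c) = 2k - g_{\rm bare}(M) = 2k$. The quotient $S^\perp/G$ of dressed logical operator equivalence classes has dimension exactly $2k$, so the $2k$ independent dressed logical operators supported on $M^c$ already exhaust all of $S^\perp/G$. Consequently, for any dressed logical operator $x$ there is a dressed logical operator $y$ supported on $M^c$ with $[x]_G=[y]_G$, i.e.\ $x$ and $y$ differ by an element of $G$, which is the claim.

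I do not expect any real obstacle here. The only step that is not a mechanical bookkeeping application of Lemma 2 is the injectivity check in the second paragraph, and that is a one-line consequence of $G\cap G^\perp=S$. If one wanted a CSS analogue, the same argument goes through verbatim using the second half of Lemma 2, treating $X$-type and $Z$-type logical operators separately.
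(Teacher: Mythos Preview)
Your argument is correct and follows essentially the same route as the paper: the paper derives Lemma~3 from Lemma~2 in the paragraph immediately preceding its statement, observing that $M$ correctable implies $g(M)=0$ and hence $g_{\rm bare}(M)=0$, whence $g(M^c)=2k$. You have simply made explicit the injectivity check (via $G\cap G^\perp=S$) that the paper leaves implicit when asserting $g(M)=0\Rightarrow g_{\rm bare}(M)=0$.
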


\section{Operator tradeoff for local subsystem codes}
\label{sec:subsystem-tradeoff}
In this section we consider local subsystem codes with qubits residing at the sites of a $D$-dimensional hypercubic lattice $\Lambda$. The code has \emph{interaction range} $w$, meaning that the generators of the gauge group $G$ can be chosen so that each generator has support on a hypercube containing $w^D$ sites.

\begin{defn}
\label{defn:boundary}
Given a set of gauge generators for a subsystem code, and a set of qubits $M$, let $M'$ denote the support of all the gauge generators that act nontrivially on $M$. The \emph{external boundary} of $M$ is $\partial_+ M = M' \cap M^c$, where $M^c$ is the complement of $M$, and the \emph{internal boundary} of $M$ is  $\partial_- M = \left(M^c\right)' \cap M$.  The \emph{boundary} of $M$ is $\partial M=\partial_+M\cup\partial_-M$, and the \emph{interior} of $M$ is $M^\circ = M \setminus \partial_- M$.
\end{defn}

Recall that a region (\emph{i.e.}, set of qubits) $M$ is said to be \emph{correctable} if no nontrivial dressed logical operation is supported on $M$, in which case erasure of $M$ can be corrected. Since the code distance $d$ is defined as the minimum weight of a dressed logical operator, $M$ is certainly correctable if $|M| < d$. But in fact much larger regions are also correctable, as follows from this lemma: 

\begin{lem}\emph{(Expansion Lemma for local subsystem codes)}
For a local subsystem code, if $M$ and $A$ are both correctable, where $A$ contains $\partial M$, then $M\cup A$ is correctable.
\label{lem:subsystem-extend}
\end{lem}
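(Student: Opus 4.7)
The plan is to take an arbitrary dressed logical operator $x$ supported on $M\cup A$ and show $x\in G$; this establishes $g(M\cup A)=0$ and hence correctability of $M\cup A$. The argument proceeds in two stages: first use the correctability of $M$ to cancel $x$ on $M$ by a gauge element confined near $M$, then use the correctability of $A$ on what remains.

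The first stage is a local version of cleaning. I would produce some $g\in G$ supported on $M'=M\cup\partial_+M$ satisfying $\Pi_M(g)=\Pi_M(x)$. To see that $\Pi_M(x)\in \Pi_M(G)$, note that $x\in S^\perp$, so $\Pi_M(x)\in \Pi_M(S^\perp)$; and since $M$ is correctable, Lemma~\ref{lem:clean-region} says every $z\in S^\perp$ is equivalent modulo $G$ to some element supported on $M^c$, which forces $\Pi_M(S^\perp)\subseteq \Pi_M(G)$. Thus there is some $g_0\in G$ with $\Pi_M(g_0)=\Pi_M(x)$. Next I would fix a gauge-generator expansion of $g_0$ and discard those generators whose support is disjoint from $M$, since they contribute nothing to $\Pi_M$. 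The remaining sum $g$ is a product of gauge generators that each touch $M$, so by Definition~\ref{defn:boundary} its support lies in $M'$, while $\Pi_M(g)=\Pi_M(g_0)=\Pi_M(x)$.

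The second stage is almost immediate. Set $y=xg$. By construction $\Pi_M(y)=0$, so $y$ is supported on $M^c$. On the other hand, the hypothesis $\partial M\subseteq A$ gives $\partial_+M\subseteq A$, hence $\mathrm{supp}(g)\subseteq M\cup A$ and therefore $\mathrm{supp}(y)\subseteq \mathrm{supp}(x)\cup \mathrm{supp}(g)\subseteq M\cup A$. Combining the two constraints, $y$ is supported on $(M\cup A)\cap M^c\subseteq A$. Since $y\in S^\perp$ and $A$ is correctable, $y\in G$, so $x=yg\in G$ as required.

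The main obstacle is the local-cleaning step: Lemma~\ref{lem:clean-region} by itself yields a cleaning gauge element somewhere in $G$, with no control over its support, whereas I need one confined to $M\cup\partial_+M$ so that $xg$ remains inside $M\cup A$ and the argument can hand off to the correctability of $A$. The resolution is the observation that generators of $G$ whose support is disjoint from $M$ are irrelevant to the cancellation on $M$ and can simply be pruned, which is precisely where the locality of the gauge generators and the definition of $\partial_+M$ enter the proof.
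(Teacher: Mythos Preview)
Your proof is correct. The local-cleaning step is sound: Lemma~\ref{lem:clean-region} gives $\Pi_M(S^\perp)\subseteq\Pi_M(G)$, and pruning the gauge-generator expansion to those generators touching $M$ is exactly the right way to confine the cleaning element to $M'=M\cup\partial_+M$. The remainder $y=xg$ then lands in $(M\cup A)\cap M^c\subseteq A$, and correctability of $A$ finishes it.

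Your route is genuinely different from the paper's. The paper introduces an auxiliary subsystem code $\mathcal{C}_{M^c}$ on $M^c$ with gauge group $\Pi_{M^c}G$, checks that bare logical operators supported on $M^c$ coincide for $\mathcal{C}$ and $\mathcal{C}_{M^c}$, and that correctability of $A$ for $\mathcal{C}$ descends to correctability of $A\cap M^c$ for $\mathcal{C}_{M^c}$; it then applies the counting form of the Cleaning Lemma (Lemma~\ref{lem:counting_op}) three times, bouncing between the two codes, to conclude $g^{(\mathcal{C})}(M\cup A)=0$. Your argument avoids the auxiliary code entirely: you invoke Lemma~\ref{lem:clean-region} once and make the locality explicit by pruning generators, which is more elementary and makes the role of $\partial_+M\subseteq A$ transparent. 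Interestingly, your approach is closer in spirit to the paper's Appendix~\ref{app:holographic_lemma_stabilizer_codes} proof for stabilizer codes than to its proof of Lemma~\ref{lem:subsystem-extend} itself. The paper's restricted-code framework, on the other hand, packages the locality into the definition of $\mathcal{C}_{M^c}$ and may generalize more readily to other counting arguments.
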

\begin{proof}
Given a subsystem code $\mathcal{C}$ with gauge group $G$, we may define a subsystem code $\mathcal{C}_{M^c}$ on $M^c$ with gauge group $\Pi_{M^c}G$, where $\Pi_{M^c}$ maps a Pauli operator to its restriction supported on $M^c$. We note that a Pauli operator $x$ supported on $M^c$ is a bare logical operator for $\mathcal{C}$ if and only if $x$ is a bare logical operator for $\mathcal{C}_{M^c}$; that is, $x$ commutes with all elements of $G$ if and only if it commutes with all elements of the restriction of $G$ to $M^c$. 

Furthermore, if $x$ is a dressed logical operator for $\mathcal{C}_{M^c}$ supported on $\partial_+M$, then $x$ can be extended to a dressed logical operator $\bar x$ for $\mathcal{C}$ supported on $\partial M$. Indeed, suppose $x=yz$, where $y$ is a bare logical operator for $\mathcal{C}_{M^c}$ (and hence also a bare logical operator for $\mathcal{C}$ supported on $M^c$), while $z$ is an element of the gauge group $\Pi_{M^c} G$ of $\mathcal{C}_{M^c}$. Then $z$ can be written as a product $z=\prod_i g_i$ of generators of $\Pi_{M^c} G$, each of which can be expressed as $g_i = \Pi_{M^c} \bar g_i$, where $\bar g_i$ is a generator of $G$ supported on $M^c\cup  \partial_-M$. Thus $\bar x = y\prod_i\bar g_i$ is a dressed logical operator for $\mathcal{C}$ supported on $\partial M$. 

It follows that if $\partial M$ is correctable for the code $\mathcal{C}$ (\emph{i.e}, code $\mathcal{C}$ has no nontrivial dressed logical operators supported on $\partial M$), then $\partial_+ M$ is correctable for the code $\mathcal{C}_{M^c}$ ($\mathcal{C}_{M^c}$ has no nontrivial dressed logical operators supported on $\partial_+ M$). By similar logic, if $A$ is correctable for $\mathcal{C}$ and contains $\partial M$, then $A\cap M^c$ is correctable for $\mathcal{C}_{M^c}$.

Suppose now that the code $\mathcal{C}$ has $k$ encoded qubits and that $M$ is correctable, \emph{i.e.} $g^{(\mathcal{C})}(M)=0$. Therefore, applying Lemma \ref{lem:counting_op} to the code $\mathcal{C}$, $g_{\rm bare}^{(\mathcal{C})}(M^c)= 2k$. Suppose further that the set $A$ containing $\partial M$ is correctable for $\mathcal{C}$, implying that $A\cap M^c$ is correctable for $\mathcal{C}_{M^c}$, \emph{i.e.} $g^{(\mathcal{C}_{M^c})}(A\cap M^c)=0$. Then applying Lemma \ref{lem:counting_op} to the code $\mathcal{C}_{M^c}$, we conclude that $g_{\rm bare}^{(\mathcal{C}_{M^c})}(M^c\setminus A)=2k$. Since each bare logical operator for $\mathcal{C}_{M^c}$, supported on $M^c\setminus A$, is also a bare logical operator for $\mathcal{C}$, supported on $M^c\setminus A$, we can now apply Lemma \ref{lem:counting_op} once again to the code $\mathcal{C}$, using the partition into $M^c\setminus A$ and $M\cup A$, finding $g^{(\mathcal{C})}(M \cup A)=0$. Thus $M \cup A$ is correctable. 
\end{proof}

If the interaction range is $w$, and $M$ is a correctable hypercube with linear size $l-2(w-1)$, then we may choose $A\supseteq \partial M$ so that $M\cup A$ is a hypercube with linear size $l$ and $M\setminus A$ is a hypercube with linear size $l - 4(w-1)$. Then $A$ contains
\[
|A| = l^D - \left[l-4(w-1)\right]^D \le 4(w-1)Dl^{D-1}
\]
qubits, and $A$ is surely correctable provided $|A|<d$, where $d$ is the code distance. Suppose that $d>1$, so a single site is correctable. Applying Lemma \ref{lem:subsystem-extend} repeatedly, we can build up larger and larger correctable hypercubes, with linear size $1 + 2(w-1), 1+ 4(w-1), 1+ 6(w-1), \dots$. 
This process continues as long as $|A|< d$. We conclude:

\begin{lem}\emph{(Holographic Principle for local subsystem codes)}
\label{lem:subsystem-hypercube}
For a $D$-dimensional local subsystem code with interaction range $w>1$ and distance $d>1 $, a hypercube with linear size $l$ is correctable if 
\begin{equation}\label{eq:hypercube-size}
4(w-1)Dl^{D-1} < d.
\end{equation}
\end{lem}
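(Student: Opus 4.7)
The plan is to prove this by iterating Lemma~\ref{lem:subsystem-extend} (the Expansion Lemma) to grow a concentric sequence of correctable hypercubes. The base case is a single site, correctable because $d > 1$. The inductive step enlarges the side by $2(w-1)$, reaching sides $1,\ 1+2(w-1),\ 1+4(w-1),\ \dots$.

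For the inductive step, suppose $M$ is a correctable hypercube of side $l_0$, and let $\tilde M$ be the concentric hypercube of side $l = l_0 + 2(w-1)$. I would take $A = \tilde M \setminus M^\circ$, a shell of depth $2(w-1)$ straddling the geometric boundary of $M$. Because the interaction range is $w$, the external boundary $\partial_+ M$ sits within $w-1$ sites of $M$ and the internal boundary $\partial_- M$ within $w-1$ sites of $M^c$, so $A \supseteq \partial M$, which is precisely the geometric hypothesis of Lemma~\ref{lem:subsystem-extend}.

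To verify that $A$ itself is correctable, I would bound its size. The inner boundary of $A$ is a concentric hypercube of side $l - 4(w-1)$, so
\[
|A| = l^D - \bigl(l - 4(w-1)\bigr)^D \le 4(w-1) D\, l^{D-1}
\]
by the elementary inequality $a^D - b^D \le D\,a^{D-1}(a-b)$ for $0 \le b \le a$. Under hypothesis \eqref{eq:hypercube-size} this is strictly less than $d$, and since the minimum weight of a nontrivial dressed logical operator is $d$, every region of size less than $d$ is correctable. Thus $A$ is correctable, and the Expansion Lemma then gives that $\tilde M = M \cup A$ is correctable, completing the inductive step.

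Iterating produces correctable hypercubes of every side of the form $1 + 2m(w-1)$ for which the shell bound remains below $d$; any hypercube of side $l$ satisfying \eqref{eq:hypercube-size} is then correctable by inheritance, as it embeds inside one of these. The argument is essentially geometric bookkeeping on top of the Expansion Lemma, and I do not anticipate any substantive obstacle; the one care point is confirming that the shell $A$ simultaneously contains $\partial M$ and comprises at most $4(w-1) D\, l^{D-1}$ sites, which is exactly what the concentric construction above delivers.
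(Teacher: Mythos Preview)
Your proposal is correct and follows essentially the same approach as the paper: iterate the Expansion Lemma (Lemma~\ref{lem:subsystem-extend}) starting from a single site, grow the correctable hypercube by $2(w-1)$ per step using a shell $A$ that straddles $\partial M$, and bound $|A| \le l^D - (l-4(w-1))^D \le 4(w-1)D\,l^{D-1}$ exactly as the paper does. The only cosmetic difference is that you name the shell explicitly as $\tilde M \setminus M^\circ$, whereas the paper specifies it by its inner and outer sides; the content is identical.
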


\noindent Thus (roughly speaking) for the hypercube to be correctable it suffices for its $\left[2(w-1)\right]$-thickened \emph{boundary}, rather than its volume, to be smaller than the code distance. Bravyi \cite{Bravyi2010Subsystem} calls this property ``the holographic principle for error correction,'' because the \emph{absence} of information encoded at the boundary of a region ensures that no information is encoded in the ``bulk.'' For local stabilizer codes, the criterion for correctability is slightly weaker than for local subsystem codes, as we discuss in Appendix \ref{app:holographic_lemma_stabilizer_codes}.

Now we are ready to prove our first tradeoff theorem.

\begin{figure}
\centering
\includegraphics[width=0.42\textwidth]{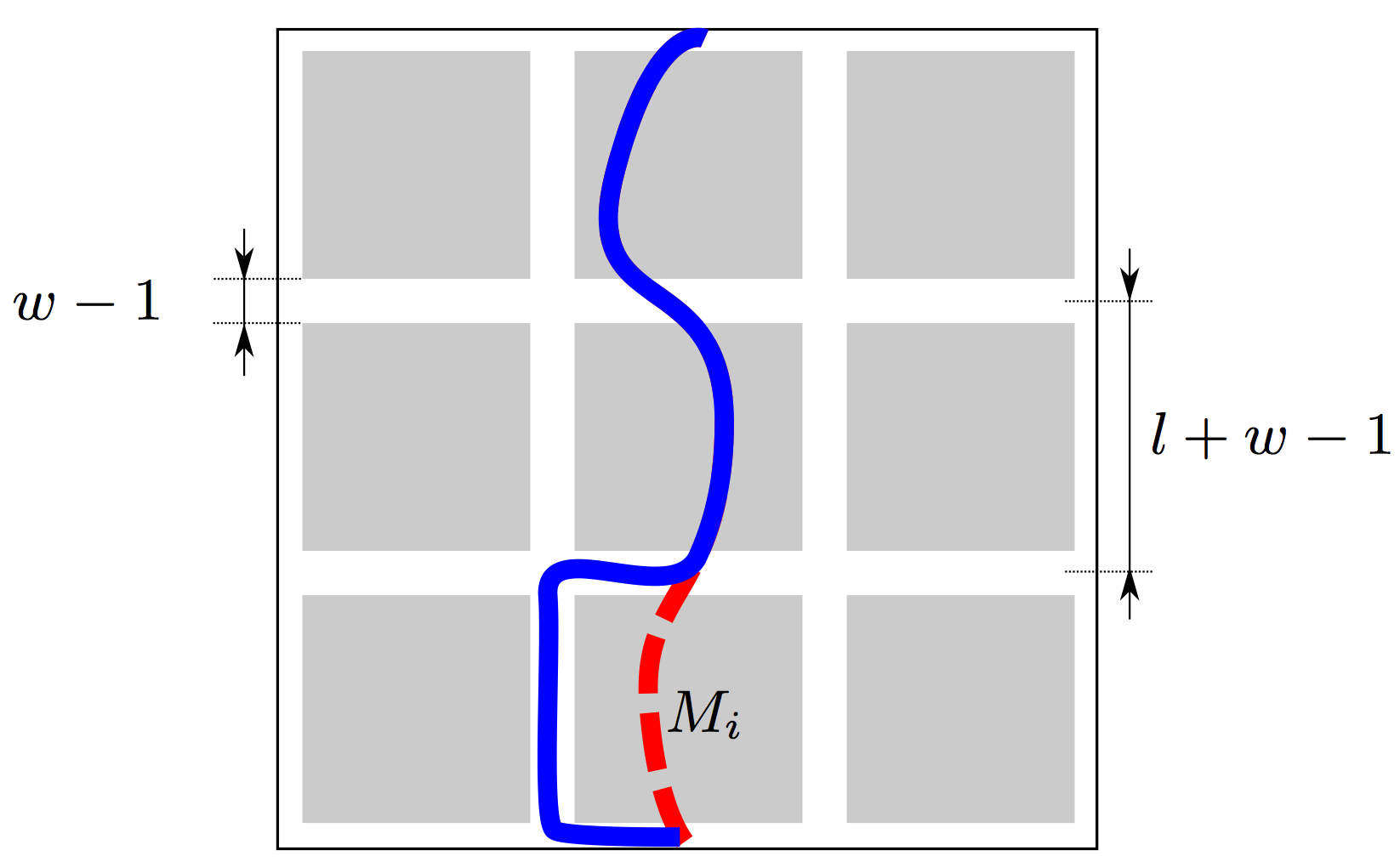}
\caption{(Color online) Lattice covering used in the proof of Theorem 1, shown in two dimensions. Each gray square is $l\times l$ and the white gap between squares has width $w-1$. The solid blue curve represents the support of a nontrivial logical operator; because the square $M_i$ is correctable, this square can be ``cleaned'' --- we can find an equivalent logical operator supported on $M_i^c$, the complement of $M_i$. When all squares are cleaned, the logical operator is supported on the narrow strips between the squares. }
\label{fig:cleaning}
\end{figure}

\begin{theorem}\emph{(Tradeoff Theorem for local subsystem codes)}
For a local subsystem code in $D\ge 2$ dimensions with interaction range $w>1$ and distance $d\gg w$, defined on a hypercubic lattice with linear size $L$, every dressed logical operator is equivalent to an operator with weight $\tilde d$ satisfying
\begin{equation}\label{eq:tradeoff-bound}
  \tilde d {d}^{1/(D-1)} < c L^D,
\end{equation}
where $c$ is a constant depending on $w$ and $D$.
\label{thm:subsystem-tradeoff}
\end{theorem}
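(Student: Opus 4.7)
The plan, illustrated in Figure \ref{fig:cleaning}, is to tile the hypercubic lattice by disjoint hypercubes of linear size $l$ separated by narrow gap strips of width $w-1$, clean the given dressed logical operator out of every hypercube using Lemma \ref{lem:clean-region}, and bound the weight of the resulting equivalent operator by the total number of sites lying in the strips.

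First I would fix $l$ to be the largest integer satisfying $4(w-1)Dl^{D-1}<d$, so that each tile $M_i$ is correctable by the Holographic Principle (Lemma \ref{lem:subsystem-hypercube}); the hypothesis $d\gg w$ ensures $l\gg w$, so the strips are genuinely narrow compared to the tiles. The key geometric observation is that because each strip has width $w-1$ and each gauge generator has linear extent $w-1$, any generator whose support meets a tile $M_i$ is contained in $M_i\cup\partial_+M_i$, which occupies $M_i$ together with only the adjacent half of the surrounding strips; in particular its support is disjoint from every other tile $M_j$, $j\neq i$.

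Next I would clean $x$ tile by tile. For a fixed tile $M_i$, Lemma \ref{lem:clean-region} yields some $g\in G$ with $xg$ supported on $M_i^c$. Writing $g=g_{\rm in}g_{\rm out}$, where $g_{\rm in}$ is the product of those generators that touch $M_i$ (supported in $M_i\cup\partial_+M_i$) and $g_{\rm out}$ the product of the remaining generators (supported in $M_i^c$), we have $(g_{\rm out})|_{M_i}=0$ and hence $(xg_{\rm in})|_{M_i}=(xg)|_{M_i}=0$. Thus $xg_{\rm in}$ is an equivalent dressed logical operator supported on $M_i^c$ whose difference from $x$ lives entirely inside $M_i\cup\partial_+M_i$. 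By the geometric observation above, this modification is disjoint from every other tile, so performing the cleanings in sequence never reintroduces support on an already-cleaned tile. The final operator $\tilde x$ is therefore supported on the complement of $\bigcup_i M_i$, which is exactly the union of strips.

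Finally, a volume count gives the theorem. With period $l+(w-1)$ along each axis, the fraction of lattice sites lying in the strips is
\[
1-\left(\frac{l}{l+w-1}\right)^D \;\le\; \frac{D(w-1)}{l+w-1} \;\le\; \frac{D(w-1)}{l},
\]
so $\tilde d\le D(w-1)L^D/l$; substituting the choice $l\sim (d/4(w-1)D)^{1/(D-1)}$ produces $\tilde d\, d^{1/(D-1)}<c(w,D)\,L^D$, matching (\ref{eq:tradeoff-bound}). The only non-routine step is the geometric check in the second paragraph that successive tile cleanings do not interfere; once that is in place, the rest is bookkeeping.
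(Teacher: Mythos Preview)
Your proposal is correct and follows essentially the same approach as the paper's proof: tile the lattice by correctable hypercubes separated by width-$(w{-}1)$ gaps, clean each hypercube using Lemma~\ref{lem:clean-region}, and bound the weight by the volume of the gaps with $l$ chosen maximal subject to Lemma~\ref{lem:subsystem-hypercube}. Your decomposition $g=g_{\rm in}g_{\rm out}$ makes the non-interference step more explicit than the paper, which simply asserts that because no gauge generator touches two hypercubes one may choose each cleaning element $y_i$ to act trivially on the other tiles; otherwise the arguments and the final volume estimate coincide.
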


\begin{proof}
As shown in Fig. \ref{fig:cleaning}, we fill the lattice with hypercubes, separated by distance $w-1$, such that each hypercube has linear size $l$ satisfying eq.\eqref{eq:hypercube-size}. (By ``distance'' we mean the number of sites in between --- \emph{e.g.} we say that adjacent sites are ``distance zero'' apart.) Thus no gauge generator acts nontrivially on more than one hypercube, and each hypercube is correctable by Lemma \ref{lem:subsystem-hypercube}. Consider any nontrivial dressed logical operator $x$, and label the hypercubes $\{M_1, M_2, M_3, \dots\}$. By Lemma 3 there exists a gauge operator $y_i$ that ``cleans'' the logical operator in the hypercube $M_i$, \emph{i.e.}, such that $xy_i$ acts trivially in $M_i$. Furthermore, since no gauge generator acts nontrivially on more than one hypercube, we can choose $y_i$ so that it acts trivially in all other hypercubes. Taking the product of all the $y_i$'s we construct  a gauge operator that cleans all hypercubes simultaneously; thus $\tilde x= x\prod_i y_i$ is equivalent to $x$ and supported on the complement of the union of hypercubes $M=\cup_i M_i$. Therefore, the weight $\tilde d$ of $\tilde x$ is upper bounded by $|M^c|$. 

The lattice is covered by hypercubes of linear size $l+(w-1)$, each centered about one of the $M_i$'s. There are $L^D/\left[l+(w-1)\right]^D$ such hypercubes in this union, each containing no more than $\left[l+(w-1)\right]^D - l^D \le (w-1)D\left[l+(w-1)\right]^{D-1}$ elements of $M^c$. Thus 
\begin{align*}
\tilde d \le |M^c| &\le (w-1)D\left[l+(w-1)\right]^{D-1}\frac{L^D}{\left[l+(w-1)\right]^D} \nonumber\\
&= \frac{(w-1)D}{l+(w-1)}L^D.
\end{align*}
We optimize this upper bound on $\tilde d$ by choosing $l$ to be the largest integer such that a hypercube with linear size $l$ is known to be correctable,  \emph{i.e.}, satisfying
\[
l < \left(\frac{d}{4(w-1)D}\right)^{1/(D-1)},
\]
thus obtaining eq.\eqref{eq:tradeoff-bound}. Note that eq.\eqref{eq:tradeoff-bound} is trivial if $d$ is a constant independent of $L$, since the weight $\tilde d$ cannot be larger than $L^D$.
\end{proof}

\section{Operator tradeoff for local commuting projector codes}

In this section we consider a local commuting projector code, defined as the simultaneous eigenspace with eigenvalue one of a set of commuting projectors. As in Sec. \ref{sec:subsystem-tradeoff} we assume that the qubits reside on a hypercubic lattice $\Lambda$ and that each projector acts trivially outside a hypercube of linear size $w$, where $w$ is the interaction range. By a \emph{logical operator} we mean any transformation that preserves the code space, and we say that two logical operators are \emph{equivalent} if they have the same action on the code space. The weight of a logical operator is the number of qubits on which it acts nontrivially. We say that a set of qubits $M$ is correctable if erasure of $M$ can be reversed by a trace-preserving completely positive recovery map. The distance $d$ of the code is the minimum size of a noncorrectable set of qubits. 

Bravyi, Poulin, and Terhal \cite{BravyiPoulinTerhal2010Tradeoffs} proved some useful properties of these codes. To state their results, we use the definition

\begin{defn}
\label{defn:boundary-projector}
Given a set of commuting projectors defining a code, and a set of qubits $M$, let $M'$ denote the support of all the projectors that act nontrivially on $M$. The \emph{external boundary} of $M$ is $\partial_+ M = M' \cap M^c$, where $M^c$ is the complement of $M$, and the \emph{internal boundary} of $M$ is  $\partial_- M = \left(M^c\right)' \cap M$.  The \emph{boundary} of $M$ is $\partial M=\partial_+M\cup\partial_-M$, and the \emph{interior} of $M$ is $M^\circ = M \setminus \partial_- M$.
\end{defn}

\begin{lem}
\label{lem:disentangling}
\emph{(Disentangling Lemma \cite{BravyiPoulinTerhal2010Tradeoffs})}
Consider a local commuting projector code and suppose that $M$ and $\partial_+M$ are both correctable regions. Then there exists a unitary operator $U_{\partial M}$ acting only on the boundary $\partial M$ such that, for any pure code vector $|\psi\rangle$,
\begin{equation}\label{eq:disentangling}
 U_{\partial M} \ket{\psi} 
= \ket{\phi_M} \otimes \ket{\psi'_{M^c}} .
\end{equation}
Here $\ket{\phi_M}$, supported on $M$, does not depend on the code vector $|\psi\rangle$, while $\ket{\psi'_{M^c}}$, supported on $M^c$, does depend on $|\psi\rangle$.
\end{lem}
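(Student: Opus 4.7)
The plan is to convert the statement into an Uhlmann-type purification problem. Correctability of $M$ gives, by the Knill--Laflamme condition for erasure, that $\rho_M(\psi) := \mathrm{Tr}_{M^c}\ketbra{\psi}$ equals a fixed state $\sigma_M$ independent of the code vector $\ket{\psi}$. Correctability of $\partial_+ M$ similarly fixes $\rho_{\partial_+ M}(\psi)$. Together these two rigidity statements should confine all of the entanglement across the $M | M^c$ cut to the thin boundary region $\partial M$, where a single unitary $U_{\partial M}$ can rotate it away.

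Concretely I would proceed in three steps. First, extract from correctability of $M$ the fact that every code vector has the same Schmidt spectrum across the $M | M^c$ cut, with common Schmidt vectors on the $M$ side (up to the gauge in the Schmidt decomposition, which can be fixed using the non-degenerate portion of $\sigma_M$'s spectrum, or absorbed into the $M^c$-side vectors otherwise). Second, use correctability of $\partial_+ M$ together with the commuting-projector structure --- in particular the fact that any projector crossing the $M|M^c$ cut must be supported in $\partial_- M \cup \partial_+ M$ by locality --- to build a canonical pure state $\ket{\chi}$ on $M \otimes \partial_+ M$ that purifies $\sigma_M$ using $\partial_+ M$ alone as the purifying subsystem. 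Third, promote $\ket{\chi}$ into a unitary $U_{\partial M}$ on $\partial_- M \otimes \partial_+ M$; applying $U_{\partial M}$ to any code vector $\ket{\psi}$ maps its $M$-part to a fixed $\ket{\phi_M}$ while the residual state on $M^c$ becomes some $\ket{\psi'_{M^c}}$, which is exactly \eqref{eq:disentangling}. The fact that the \emph{same} $U_{\partial M}$ works for every $\ket{\psi}$ is then a consequence of the $\psi$-independence of $\sigma_M$ (and hence of its Schmidt vectors on the $M$ side).

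The main obstacle is the second step. Producing a purification of $\sigma_M$ supported on $\partial_+ M$, and checking that each code vector is actually realized by a single rotation on $\partial M$, requires a genuine use of both correctability conditions together with the commuting-projector structure. In particular one must translate ``$\partial_+ M$ is correctable'' into the operational statement that the Schmidt rank of any code vector across $M | M^c$ fits inside $\mathcal{H}_{\partial_+ M}$ --- i.e.\ that $\mathrm{rank}(\sigma_M) \le \dim \mathcal{H}_{\partial_+ M}$ and that the purifying degrees of freedom do not spill into the deep exterior $C = M^c \setminus \partial_+ M$. Once that localized purification is in hand, the rest of the argument is a routine Uhlmann/Schmidt manipulation, so the ``locality of the purification'' is the crux of the lemma.
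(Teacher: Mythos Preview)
The paper does not actually prove this lemma; it is quoted (with attribution) from Bravyi--Poulin--Terhal and used as a black box, so there is no in-paper argument to compare your proposal against. I can, however, comment on your outline relative to the original BPT proof.

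Your diagnosis of the crux is right --- the whole content of the lemma is that the entanglement across the $M\,|\,M^c$ cut is geometrically confined to $\partial M$, and this is precisely where the commuting-projector hypothesis must be spent --- but neither of your hard steps is yet an argument. In step~2 you say you need $\mathrm{rank}(\sigma_M)\le\dim\mathcal H_{\partial_+M}$ and that the purifying degrees of freedom ``do not spill'' into the deep exterior; you correctly note this requires the commuting-projector structure, but you never say \emph{how} that structure is used, and the two correctability hypotheses alone do not imply it. The BPT route does not go through a rank bound at all: it uses the factorization $\Pi=\Pi_1\Pi_2$ with $\Pi_1$ supported on $M'=M\cup\partial_+M$ and $\Pi_2$ supported on $(M^\circ)^c$, so that $\rho_{M'}(\psi)$ is forced into the range of $\Pi_1$, and then combines the two correctability conditions to read off the product structure directly. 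Your step~3 has a related gap: even granting a canonical purification $\ket{\chi}$ of $\sigma_M$ on $M\otimes\partial_+M$, Uhlmann's theorem only furnishes, for each $\ket{\psi}$, an isometry on the \emph{full} purifying system $M^c$. The disentangler you want acts on $\partial_-M\cup\partial_+M$, which straddles the cut, so it is not an Uhlmann move, and its $\psi$-independence does not follow merely from the $\psi$-independence of $\sigma_M$; ``promote $\ket{\chi}$ into a unitary $U_{\partial M}$'' is not yet a well-defined operation. In BPT both gaps are closed by the same projector factorization --- your sketch names the right objects but has not yet invoked the one structural hypothesis that makes the lemma true.
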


\noindent
The Disentangling Lemma says that, if $M$ and $\partial_+M$ are both correctable, then the entanglement of code vectors across the cut between $M$ and $M^c$ is localized in $\partial M$ and can be removed by a unitary transformation acting on only $\partial M$. Furthermore, in the resulting product state, no information distinguishing one code vector from another is available in $M$. This Lemma has a simple but important corollary: 

\begin{lem}
\emph{(Expansion Lemma for local commuting projector codes \cite{BravyiPoulinTerhal2010Tradeoffs})}
For a local commuting projector code, if $M$ and $A$ are both correctable, where $A$ contains $\partial M$, then $M\cup A$ is correctable.
\label{lem:commuting-projector-extend}
\end{lem}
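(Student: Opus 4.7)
The plan is to invoke the Disentangling Lemma directly and reduce correctability of $M\cup A$ to correctability of $A$. First I would verify the hypotheses of the lemma: since $\partial_+M\subseteq \partial M\subseteq A$ and any subset of a correctable region is itself correctable (given a recovery from $N^c$, one handles erasure of $N'\subseteq N$ by first tracing out $N\setminus N'$ and then applying the recovery), correctability of $A$ implies correctability of $\partial_+M$. Together with correctability of $M$ itself, the Disentangling Lemma then supplies a unitary $U_{\partial M}$ supported on $\partial M\subseteq A$ such that $U_{\partial M}\ket{\psi_i}=\ket{\phi_M}\otimes\ket{\psi_i'}$ for every vector $\ket{\psi_i}$ in a fixed orthonormal basis of the code space, where $\ket{\phi_M}$ on $M$ is codeword-independent and each $\ket{\psi_i'}$ lives on $M^c$.

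Next I would translate correctability into the standard equivalent criterion for erasure: a region $N$ is correctable iff $\operatorname{Tr}_{N^c}\ket{\psi_i}\!\bra{\psi_j}=\delta_{ij}\rho_N$ for a single fixed $\rho_N$ on $N$. Unitarity of $U_{\partial M}$ ensures $\{\ket{\psi_i'}\}$ is orthonormal. Because $\partial M$ is disjoint from $A^c$, conjugating the partial trace on $A^c$ by $U_{\partial M}$ yields
\[
U_{\partial M}\bigl(\operatorname{Tr}_{A^c}\ket{\psi_i}\!\bra{\psi_j}\bigr)U_{\partial M}^\dagger = \sigma_{A\cap M}\otimes \tau_{ij},
\]
where $\sigma_{A\cap M}=\operatorname{Tr}_{M\setminus A}\ket{\phi_M}\!\bra{\phi_M}$ is codeword-independent and $\tau_{ij}=\operatorname{Tr}_{(M\cup A)^c}\ket{\psi_i'}\!\bra{\psi_j'}$ is supported on $A\setminus M$. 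Correctability of $A$ forces the left-hand side to equal $\delta_{ij}$ times a single fixed operator, and canceling the nonzero factor $\sigma_{A\cap M}$ gives $\tau_{ij}=\delta_{ij}\tau$ for a codeword-independent $\tau$.

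Finally, the analogous conjugation of the partial trace on $(M\cup A)^c$, which is also disjoint from $\partial M$, yields
\[
U_{\partial M}\bigl(\operatorname{Tr}_{(M\cup A)^c}\ket{\psi_i}\!\bra{\psi_j}\bigr)U_{\partial M}^\dagger = \delta_{ij}\,\ket{\phi_M}\!\bra{\phi_M}\otimes\tau,
\]
which is $\delta_{ij}$ times a single fixed operator; undoing the conjugation establishes the correctability criterion for $M\cup A$. The main obstacle I anticipate is the geometric bookkeeping — keeping straight that $\partial M\subseteq A\subseteq M\cup A$ lets $U_{\partial M}$ commute with both partial traces in question, and that the splittings $A=(A\cap M)\sqcup(A\setminus M)$ and $A^c=(M\setminus A)\sqcup(M\cup A)^c$ align the tensor-product factorization produced by the Disentangling Lemma with the regions being traced out.
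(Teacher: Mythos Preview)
Your argument is correct and follows the same approach as the paper: both invoke the Disentangling Lemma, use that $U_{\partial M}$ is supported inside $A$ (hence commutes with the relevant partial traces), and exploit the codeword-independence of the $M$ factor $\ket{\phi_M}$. The paper's proof is a three-sentence operational version of your computation: it observes that after erasing $A$ the remaining state on $M\setminus A$ is fixed (this is your $\sigma_{A\cap M}$ line read the other way), so all information already sits in $(M\cup A)^c$ and one can extend the recovery for $A$ to one for $M\cup A$; you instead verify the Knill--Laflamme-style partial-trace criterion directly, which is the same content made explicit.
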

\begin{proof}
By eq.(\ref{eq:disentangling}), if $A$ is erased the resulting state on $M\setminus A$ is independent of the code vector $|\psi\rangle$; all the information needed to reconstruct $|\psi\rangle$ resides in $M^c\setminus A$. Therefore, we can erase $M\setminus A$ as well without compromising our ability to reconstruct $|\psi\rangle$; that is, $M\cup A$ is correctable.
\end{proof}

\noindent Definition \ref{defn:boundary-projector} and Lemma \ref{lem:commuting-projector-extend} for commuting projector codes are parallel to Definition \ref{defn:boundary} and Lemma \ref{lem:subsystem-extend} for subsystem codes. Arguing as in the proof of Lemma \ref{lem:subsystem-hypercube}, we see that one consequence is a holographic principle for these codes:
\begin{lem}\emph{(Holographic Principle for local commuting projector codes)}
\label{lem:projector-hypercube}
For a $D$-dimensional local commuting projector code with interaction range $w>1$ and distance $d>1 $, a hypercube with linear size $l$ is correctable if 
\begin{equation}\label{eq:hypercube-size-projector}
4(w-1)Dl^{D-1} < d.
\end{equation}
\end{lem}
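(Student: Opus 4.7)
The plan is to mimic the inductive argument that produced Lemma~\ref{lem:subsystem-hypercube}, with Lemma~\ref{lem:commuting-projector-extend} taking the place of Lemma~\ref{lem:subsystem-extend}. I would start from a single qubit, which is a correctable hypercube of linear size~$1$ since $d>1$, and grow a nested sequence of correctable hypercubes. At each step I adjoin an annular shell $A$ that contains the current hypercube's boundary $\partial M$; because any region of size less than $d$ is automatically correctable, it suffices to keep $|A|<d$, and then Lemma~\ref{lem:commuting-projector-extend} promotes $M\cup A$ to a correctable hypercube.

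In detail, if $M$ is a correctable hypercube of linear size $l'-2(w-1)$, I would take $M\cup A$ to be the concentric hypercube of linear size $l'$, so that $A$ is the annular shell of thickness $2(w-1)$ surrounding $M$. Since each projector has support in a hypercube of side $w$, the boundary $\partial M$ has thickness at most $w-1$ on each side, and is therefore contained in $A$. A crude count of the shell volume gives
\[
|A| = l'^{\,D} - \bigl[l'-4(w-1)\bigr]^D \;\le\; 4(w-1)D\,l'^{\,D-1}.
\]
Whenever the right side is less than $d$, both $A$ and $M$ are correctable, so Lemma~\ref{lem:commuting-projector-extend} certifies the enlarged hypercube as correctable as well.

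Iterating, I obtain correctable hypercubes of linear sizes $1,\;1+2(w-1),\;1+4(w-1),\dots$, and the iteration continues as long as the shell needed to reach linear size $l$ satisfies $4(w-1)D\,l^{D-1}<d$, giving exactly the stated criterion. There is no real obstacle here beyond checking that the shell volume bound transfers verbatim from the subsystem proof; the substantive work was already done in establishing Lemma~\ref{lem:commuting-projector-extend}, which provides a growth mechanism completely parallel to the subsystem case.
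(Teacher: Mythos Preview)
Your proposal is correct and is exactly the paper's approach: the paper's own proof is literally ``arguing as in the proof of Lemma~\ref{lem:subsystem-hypercube},'' with Lemma~\ref{lem:commuting-projector-extend} replacing Lemma~\ref{lem:subsystem-extend}, which is precisely what you do. One small wording quibble: describing $A$ as the shell ``surrounding'' $M$ suggests it lies entirely outside $M$, but your volume formula $|A|=l'^{\,D}-[l'-4(w-1)]^D$ and the requirement $A\supseteq\partial M=\partial_+M\cup\partial_-M$ correctly take $A$ to be the annulus between the concentric hypercubes of sizes $l'-4(w-1)$ and $l'$, extending $(w-1)$ into $M$ as well as $(w-1)$ outside it.
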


We will need an analog of the Cleaning Lemma to analyze the logical operator tradeoff for local commuting projector codes; it can be derived from the Disentangling Lemma. 

\begin{lem}
\emph{(Cleaning Lemma for local commuting projector codes)}
Consider a local commuting projector code, and suppose that $M$ and $\partial_+ M$ are both correctable. For any logical operator $W$ there exists an equivalent logical operator $V$ supported on the complement of the interior $M^{\circ}$ of $M$. If $W$ is an isometry, then $V$ can be chosen to be unitary.
\label{lem:cleaning-projector}
\end{lem}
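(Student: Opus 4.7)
The plan is to use the Disentangling Lemma (Lemma~\ref{lem:disentangling}) as a change of frame: after applying $U_{\partial M}$, every code vector takes the product form $\ket{\phi_M} \otimes \ket{\psi'_{M^c}}$ with a universal factor on $M$, so an equivalent logical operator can be built by acting only on the $M^c$ factor and then conjugating back by $U_{\partial M}^\dagger$. Since $U_{\partial M}$ is supported on $\partial M$ and the new $M^c$ action sits on $M^c$, the resulting operator lives on $\partial M \cup M^c$, which is precisely the complement of $M^\circ = M \setminus \partial_- M$.

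Concretely, I would first introduce the subspace $\mathcal{K} \subseteq \mathcal{H}_{M^c}$ consisting of the post-disentangling projections, i.e.\ those $\ket{\psi'}$ with $U_{\partial M}\ket{\psi} = \ket{\phi_M}\otimes\ket{\psi'}$ for some code vector $\ket{\psi}$. Because $\ket{\phi_M}$ is a fixed unit vector and $U_{\partial M}$ is unitary, the assignment $\ket{\psi}\mapsto \ket{\psi'}$ is an isometric bijection from the code space onto $\mathcal{K}$. Since $W$ preserves the code space, $W\ket{\psi}$ again disentangles as $U_{\partial M}W\ket{\psi} = \ket{\phi_M}\otimes\ket{\chi'}$, and declaring $\tilde W\ket{\psi'} := \ket{\chi'}$ defines a linear map $\tilde W : \mathcal{K} \to \mathcal{K}$; well-definedness is exactly the bijection above. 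If $W$ is an isometry on the code space then $\tilde W$ is an isometry on $\mathcal{K}$, and I would extend it to a unitary $\hat W$ on $\mathcal{H}_{M^c}$ by pairing $\mathcal{K}^\perp$ with $(\tilde W \mathcal{K})^\perp$ through any isometry (they have equal dimension). Otherwise any linear extension $\hat W$ of $\tilde W$ will do.

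The candidate is then
\[
V := U_{\partial M}^\dagger \bigl(I_M \otimes \hat W\bigr) U_{\partial M},
\]
which by construction is supported on $\partial M \cup M^c = (M^\circ)^c$. Equivalence with $W$ on the code space is the one-line check
\[
V\ket{\psi} = U_{\partial M}^\dagger \bigl(I_M \otimes \hat W\bigr)\bigl(\ket{\phi_M}\otimes\ket{\psi'}\bigr) = U_{\partial M}^\dagger \bigl(\ket{\phi_M}\otimes\ket{\chi'}\bigr) = W\ket{\psi},
\]
and unitarity of $V$ when $W$ is an isometry follows from unitarity of $U_{\partial M}$ and $\hat W$.

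The main obstacle I expect is not the construction but the bookkeeping. Well-definedness of $\tilde W$ must be tied to the fact that $\ket{\psi}\mapsto\ket{\psi'}$ is a bijection, so that a code state is recovered unambiguously from its $M^c$ factor; the unitary extension then reduces to elementary linear algebra on orthogonal complements. The real content has already been supplied by Lemma~\ref{lem:disentangling}, which converts a global statement about logical operators into a purely local statement on $M^c$.
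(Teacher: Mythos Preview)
Your proposal is correct and follows essentially the same route as the paper's proof: both invoke the Disentangling Lemma to write code vectors as $U_{\partial M}^{\dagger}(\ket{\phi_M}\otimes\ket{\psi'}_{M^c})$, define the induced action on the $M^c$ factor, extend it (unitarily, in the isometry case) to all of $\mathcal{H}_{M^c}$, and conjugate back by $U_{\partial M}$ to obtain $V$ supported on $(M^\circ)^c$. The paper merely phrases this in terms of an orthonormal basis $\{\ket{\alpha_i}\}$ of the code space rather than the subspace $\mathcal{K}$, but the construction and verification are the same.
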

\begin{proof}
Let us name the regions:
\begin{align*}
 A =& M^\circ = M \setminus \partial_- M, & B =& \partial_- M, \\
 C =& \partial_+ M,                     & D =& (ABC)^c.
\end{align*}
Let $\{ \ket{\alpha_i} \}$ be an orthonormal basis for the code space. By Lemma \ref{lem:disentangling}, there exists a unitary transformation $U_{BC}$, and vectors $\ket{\phi}_{AB}, \{\ket{\alpha'_i}_{CD}\}$ such that
\[
 \ket{\alpha_i} = U_{BC} \ket{\phi}_{AB} \otimes \ket{\alpha'_i}_{CD} ,
\]
where the normalized vector $|\phi\rangle_{AB}$ does not depend on $i$ and the vectors $\{|\alpha'_i\rangle_{CD}\}$ are normalized and mutually orthogonal. Because $W$ is a logical operator, $\ket{\beta_i} \equiv W \ket{\alpha_i}$ is also a code vector, and therefore
\[
 \ket{\beta_i} = U_{BC} \ket{\phi}_{AB} \otimes \ket{\beta'_i}_{CD}
\]
where $\{\ket{\beta'_i}_{CD}\}$ is another set of vectors; if $W$ is an isometry then these vectors, too, are normalized and mutually orthogonal. Define a transformation $V'$ by $\ket{\beta'_i}_{CD} = V' \ket{\alpha'_i}_{CD}$, and choose an arbitrary extension so that $V'$ becomes an operator on $CD$. If $W$ is an isometry, then this extension $V'_{CD}$ can be chosen to be unitary. We now have 
\begin{align*}
 W \ket{\alpha_i}= |\beta_i\rangle &= U_{BC} (I_{AB} \otimes V'_{CD}) \ket{\phi}_{AB}\otimes \ket{\alpha'_i}_{CD}\\
&=U_{BC} (I_{AB} \otimes V'_{CD}) U_{BC}^\dagger \ket{\alpha_i}
\end{align*}
for all $i$. Defining 
\[
V_{{(M^\circ})^c} = U_{BC} (I_{AB} \otimes V'_{CD}) U_{BC}^\dagger,
\]  
we observe that $V_{{(M^\circ})^c}$ acts trivially on $A=M^\circ$ and has the same action on code vectors as $W$, completing the proof.
\end{proof}

To prove the tradeoff theorem we will need a further lemma establishing that a union of correctable sets is correctable under suitable conditions. Recall that we say a set of qubits $M$ is correctable if and only if erasure of $M$ can be corrected. Equivalently, $M$ is correctable if and only if, for any operator $\mathcal{O}$ supported on $M$,
\begin{equation}\label{eq:correctable}
\Pi \mathcal{O} \Pi = c_{\mathcal{O}} \Pi
\end{equation}
where $\Pi$ denotes the projector onto the code space and $c_{\mathcal{O}}$ is a constant (possibly zero) depending on $\mathcal{O}$ \cite{Gottesman2009Introduction}. 

\begin{lem}
\emph{(The union of separated correctable regions is correctable)}
For a local commuting projector code, suppose that  $M$ and $N$ are correctable regions such that no projector acts nontrivially on both $M$ and $N$. Then $M \cup N$ is also correctable.
\label{lem:extension_of_correctable_region}
\end{lem}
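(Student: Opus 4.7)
The plan is to apply the Knill--Laflamme-style criterion in eq.\eqref{eq:correctable}: $M\cup N$ is correctable iff $\Pi\,\mathcal{O}\,\Pi=c_{\mathcal{O}}\,\Pi$ for every operator $\mathcal{O}$ supported on $M\cup N$. By bilinearity, it suffices to verify this for operators of the form $\mathcal{O}=\mathcal{O}_M\mathcal{O}_N$ with $\mathcal{O}_M$ supported on $M$ and $\mathcal{O}_N$ supported on $N$.

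Because no local projector acts on both $M$ and $N$, the defining commuting projectors partition into three disjoint classes: those touching $M$ alone, those touching $N$ alone, and those touching neither. Let $\Pi_M,\Pi_N,\Pi_0$ denote the products over these three classes. Since the defining projectors all commute, so do $\Pi_M,\Pi_N,\Pi_0$; each is itself a projector, and $\Pi=\Pi_0\Pi_M\Pi_N$. Crucially, $\Pi_M$ is supported on $N^c$, $\Pi_N$ is supported on $M^c$, and $\Pi_0$ is supported on $(M\cup N)^c$; consequently $\Pi_N$ commutes with $\mathcal{O}_M$, $\Pi_M$ and $\Pi_0$ commute with $\mathcal{O}_N$, and $\Pi_0$ commutes with both.

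The core step is to establish the factorization
\[
\Pi\,\mathcal{O}_M\mathcal{O}_N\,\Pi \;=\; \Pi\,\mathcal{O}_M\,\Pi\,\mathcal{O}_N\,\Pi,
\]
equivalently, $\Pi\,\mathcal{O}_M(I-\Pi)\mathcal{O}_N\,\Pi=0$. I would prove this via the telescoping identity
\[
I-\Pi = (I-\Pi_0) + \Pi_0(I-\Pi_M) + \Pi_0\Pi_M(I-\Pi_N)
\]
and handle the three resulting terms separately. In each term the factor $(I-\Pi_X)$ is transported, via the support observations above, all the way to one of the outer $\Pi$'s, where it annihilates because $\Pi_X\Pi=\Pi$. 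The first two terms are immediate by commuting the offending factors rightward past $\mathcal{O}_N$. The only term requiring care is the third: there $(I-\Pi_N)$ does \emph{not} commute with $\mathcal{O}_N$, so it must instead be pushed leftward, first through $\Pi_0\Pi_M$ (with which $\Pi_N$ commutes) and then through $\mathcal{O}_M$ (since $\Pi_N$ is supported on $M^c$), until it meets the outer $\Pi$ and vanishes. This support bookkeeping is the main obstacle; the rest is purely formal.

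With the factorization in hand, correctability of $M$ gives $\Pi\,\mathcal{O}_M\,\Pi=c_M\Pi$ and correctability of $N$ gives $\Pi\,\mathcal{O}_N\,\Pi=c_N\Pi$, so $\Pi\,\mathcal{O}_M\mathcal{O}_N\,\Pi=c_Mc_N\Pi$. Extending by linearity to an arbitrary $\mathcal{O}$ supported on $M\cup N$ yields the Knill--Laflamme condition for $M\cup N$, completing the proof that $M\cup N$ is correctable.
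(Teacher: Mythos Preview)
Your proof is correct and follows essentially the same approach as the paper: both establish the factorization $\Pi\,\mathcal{O}_M\mathcal{O}_N\,\Pi = (\Pi\,\mathcal{O}_M\,\Pi)(\Pi\,\mathcal{O}_N\,\Pi)$ by partitioning the defining projectors according to their supports and exploiting the resulting commutation relations, then apply correctability of $M$ and $N$ separately. The paper's execution is slightly more direct---it uses only a two-way split $\Pi=\Pi_{N'}\Pi_{N'}^c$ (projectors touching $N$ versus the rest) and inserts these factors on either side of $\mathcal{O}_M\mathcal{O}_N$ to slide them into the middle, rather than your three-way split and telescoping of $I-\Pi$---but the substance is the same.
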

\noindent A weaker version of this lemma was proved in \cite{BravyiPoulinTerhal2010Tradeoffs}.
\begin{proof}
Let $\mathcal{S}$ be the set of local commuting projectors that define the code. We denote by $\mathcal{S}'_N$ the set of projectors in $\mathcal{S}$ that act nontrivially on $N$. Define
\begin{align*}
\Pi_{N'} &= \prod_{\Pi_a \in \mathcal{S}'_N} \Pi_a ,\\
\Pi_{N'}^c    &= \prod_{\Pi_a \in \mathcal{S} \setminus \mathcal{S}'_N} \Pi_a ,
\end{align*}
and note that the projector onto the code space is 
\[\Pi = \prod_{\Pi_a\in\mathcal{S}}\Pi_a=\Pi_{N'} \Pi_{N'}^c.
\]
Also note that the support of $\Pi_{N'}$ does not intersect $M$ and the support of $\Pi_{N'}^c$ does not intersect $N$. Let $\mathcal{O}$ be an arbitrary operator supported on $M \cup N$; we will show that $\mathcal{O}$ satisfies eq.\eqref{eq:correctable}. Since $M$ and $N$ are disjoint, $\mathcal{O}$ has a Schmidt decomposition
\[
\mathcal{O} = \sum_\alpha \mathcal{O}_M^\alpha \otimes \mathcal{O}_N^\alpha
\]
where each $\mathcal{O}_M^\alpha$ is supported on $M$ and each $\mathcal{O}_N^\alpha$ is supported on $N$. Since $\Pi_{N'}$ commutes with $\mathcal{O}_M^\alpha$ and $\Pi_{N'}^c$ commutes with $\mathcal{O}_N^\alpha$,
\begin{align*}
 \Pi \mathcal{O} \Pi 
&= \sum_\alpha \Pi (\Pi_{N'}) \mathcal{O}_M^\alpha \mathcal{O}_N^\alpha (\Pi_{N'}^c) \Pi \\
&= \sum_\alpha \Pi \mathcal{O}_M^\alpha (\Pi_{N'}) (\Pi_{N'}^c) \mathcal{O}_N^\alpha \Pi \\
&= \sum_\alpha \left(\Pi \mathcal{O}_M^\alpha \Pi \right)\left(\Pi\mathcal{O}_N^\alpha \Pi\right) \\
&= \sum_\alpha c_{ \mathcal{O}_M^\alpha } c_{ \mathcal{O}_N^\alpha } \Pi \\
&= c_{\mathcal{O}} \Pi
\end{align*}
where in the fourth equality we used the correctability of $M$ and $N$. Thus $\mathcal{O}$ obeys eq.\eqref{eq:correctable}, and $M \cup N$ is correctable.
\end{proof}

Now we are ready to state and prove our second tradeoff theorem.

\begin{theorem}\emph{(Tradeoff Theorem for local commuting projector codes)}
For a local commuting projector code in $D\ge 2$ dimensions with interaction range $w>1$ and distance $d\gg w$, defined on a hypercubic lattice with linear size $L$, every logical operator is equivalent to an operator with weight $\tilde d$ satisfying
\begin{equation}\label{eq:tradeoff-bound-again}
  \tilde d {d}^{1/(D-1)} < c L^D,
\end{equation}
where $c$ is a constant depending on $w$ and $D$.
\label{thm:commuting_tradeoff}
\end{theorem}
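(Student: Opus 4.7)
The plan is to adapt the argument of Theorem \ref{thm:subsystem-tradeoff} to the commuting-projector setting, replacing Lemma \ref{lem:subsystem-extend} by the Expansion Lemma for commuting projector codes (Lemma \ref{lem:commuting-projector-extend}), replacing the subsystem Cleaning Lemma (Lemma \ref{lem:clean-region}) by Lemma \ref{lem:cleaning-projector}, and making essential use of Lemma \ref{lem:extension_of_correctable_region} to glue separated correctable regions together. The key structural difference from the subsystem proof is that Lemma \ref{lem:cleaning-projector} demands that both $M$ \emph{and} its external boundary $\partial_+M$ be correctable, and it returns a cleaned operator supported on the complement of the interior $M^\circ$ rather than on $M^c$ itself. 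This forces the covering to leave slightly wider gaps than in Theorem \ref{thm:subsystem-tradeoff}, but only by an additive $O(w)$, so the final scaling is unaffected.

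Concretely, I would tile the lattice $\Lambda$ with disjoint hypercubes $\{M_i\}$ of linear size $l$ separated by gaps of width at least $3(w-1)$, where $l$ is the largest integer satisfying the holographic constraint $4(w-1)Dl^{D-1}<d$ of Lemma \ref{lem:projector-hypercube}. The $3(w-1)$ gap is chosen so that, simultaneously, no projector acts nontrivially on both $M_i$ and $M_j$ for $i\ne j$, and no projector acts nontrivially on both $\partial_+M_i$ and $\partial_+M_j$. Each $M_i$ is correctable by Lemma \ref{lem:projector-hypercube}, and each shell $\partial_+M_i$ has at most $2D(w-1)(l+w)^{D-1}<d$ qubits, so is correctable directly from the definition of distance. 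Two applications of Lemma \ref{lem:extension_of_correctable_region} then certify that $M=\bigcup_i M_i$ is correctable and that $\partial_+M=\bigcup_i \partial_+M_i$ is correctable.

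Given any logical operator $W$, Lemma \ref{lem:cleaning-projector} now produces an equivalent operator $\tilde W$ supported on $(M^\circ)^c$, so $\tilde d\le |(M^\circ)^c|$. Bounding $|M_i^\circ|\ge (l-2(w-1))^D$ and noting that the number of hypercubes is $\Theta(L^D/(l+3(w-1))^D)$, a routine computation parallel to the one in the proof of Theorem \ref{thm:subsystem-tradeoff} gives $|(M^\circ)^c|=O(L^D/l)$. Substituting the optimal choice $l=\Theta(d^{1/(D-1)})$ allowed by the holographic constraint yields $\tilde d\, d^{1/(D-1)} < c\, L^D$, which is eq.\eqref{eq:tradeoff-bound-again}.

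The main obstacle --- and the reason Lemma \ref{lem:extension_of_correctable_region} is indispensable --- is that Lemma \ref{lem:cleaning-projector} cannot be applied iteratively, one hypercube at a time. Because the cleaning unitary for $M_i$ acts on all of $(M_i^\circ)^c$, a subsequent cleaning on $M_j$ would in general spoil the cleaning already achieved on $M_i$. We must therefore invoke Lemma \ref{lem:cleaning-projector} just once, with $M$ equal to the full union of hypercubes, which is precisely why we need to certify correctability of the union $\partial_+M=\bigcup_i \partial_+M_i$ as a whole and not merely of each piece separately.
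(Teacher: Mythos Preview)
Your proposal is correct and follows essentially the same strategy as the paper's own proof: tile by correctable hypercubes, use Lemma~\ref{lem:extension_of_correctable_region} to certify that both $M=\bigcup_i M_i$ and $\partial_+M$ are correctable, then apply Lemma~\ref{lem:cleaning-projector} once to push any logical operator onto $(M^\circ)^c$ and bound its size. The only visible difference is your spacing of $3(w-1)$ versus the paper's $w-1$; your choice is arguably cleaner, since it guarantees the shells $\partial_+M_i$ are themselves pairwise separated by $w-1$ so that Lemma~\ref{lem:extension_of_correctable_region} applies to them directly, and it does not affect the final scaling.
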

\begin{proof}
The proof is similar to the proof of  Theorem \ref{thm:subsystem-tradeoff}. We fill the lattice with hypercubes, separated by distance $w-1$, where each hypercube $M_i$ has linear size $l$ sufficiently small so that $M_i$ and $\partial_+M_i$ are both correctable. Applying Lemma \ref{lem:extension_of_correctable_region} repeatedly, we conclude that the union $M$ of all $M_i$ is correctable, and the union $\partial_+M$ of all $\partial_+ M_i$ is correctable. 

For any logical operator, Lemma \ref{lem:cleaning-projector} now ensures the existence of an equivalent logical operator supported outside the interior $M^\circ$ of $M$, and hence the weight $\tilde d$ of this equivalent logical operator is bounded above by $|(M^\circ)^c|$. The lattice is covered by hypercubes with linear size $l + (w-1)$, each centered about one of the $M_i$, and there are $L^D/\left[l+(w-1)\right]^D$ such hypercubes, each containing no more than 
\begin{align*}
\left[l+(w-1)\right]^D - \left[l-2(w-1)\right]^D \nonumber\\
\le 3(w-1)D\left[l+(w-1)\right]^{D-1}
\end{align*}
elements of $(M^\circ)^c$; therefore,
\begin{align*}
\tilde d &\le |(M^\circ)^c| \\
&\le 3(w-1)D\left[l+(w-1)\right]^{D-1}\frac{L^D}{\left[l+(w-1)\right]^D}\\
& = \frac{3(w-1)D}{l+(w-1)}L^D.
\end{align*}

To ensure that $M_i$ and $\partial_+ M_i$ are correctable, it suffices that $|\partial M_i| < d$, where $d$ is the code distance, or
\begin{align*}
|\partial M_i| 
&\le \left[l+2(w-1)\right]^D - \left[l-2(w-1)\right]^D \\
&\le 4(w-1)D\left[l+2(w-1)\right]^{D-1} < d.
\end{align*}
We choose the largest such integer value of $l$, obtaining eq.\eqref{eq:tradeoff-bound-again}.
\end{proof}

\section{``String'' operators for local commuting projector codes}
\label{sec:string}
Because the code distance $d$ is defined as the size of the smallest noncorrectable set, and because a set supporting a nontrivial logical operator is noncorrectable, we have $d\le \tilde d$ and hence Theorem \ref{thm:commuting_tradeoff} implies 
\[
d = O(L^{D-1}).
\] 
In fact we can make a stronger statement, specifying the geometry of a region that supports a nontrivial logical operator with weight $O(L^{D-1})$. On the  hypercube $\{1,2,3,\dots L\}^D$, we refer to the set $\{i, i+1, \dots, i+r-1\}\times \{1,2,3,\dots,L\}^{D-1}$ as a \emph{slab} of width $r$. Let us say that a code is nontrivial if the code space dimension is greater than one. Then:

\begin{lem}
\emph{(Existence of a noncorrectable thin slab)}
For a nontrivial local commuting projector code in $D\ge 1$ dimensions with interaction range $w>1$, there is a noncorrectable slab of width $3(w-1)$.
\label{lem:slab}
\end{lem}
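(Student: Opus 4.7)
The plan is to prove the contrapositive: if every slab of width $3(w-1)$ is correctable, then the entire lattice is correctable, which by the characterization \eqref{eq:correctable} forces the code space to be one-dimensional, contradicting nontriviality.

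First I would partition the first coordinate axis $\{1,\dots,L\}$ into alternating ``thick'' intervals of width $3(w-1)$ and ``thin'' gaps of width $w-1$, giving a period of $4(w-1)$. Let $S_j$ denote the $j$-th thick slab (extended trivially along the remaining $D-1$ coordinates) and $G_j$ the $j$-th gap. Each $S_j$ is correctable by hypothesis, and distinct $S_j$'s are separated by $w-1$ sites along the first axis, so their nearest pair of sites is $w$ positions apart---farther than the support diameter $w-1$ of any single projector. Hence no projector acts on two distinct $S_j$'s, and iterated use of Lemma~\ref{lem:extension_of_correctable_region} shows $M := \bigcup_j S_j$ is correctable.

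Next, around each gap $G_j$ I would build a ``fat'' slab $F_j$ consisting of $G_j$ together with the $(w-1)$-wide edge strips of the two thick slabs flanking it; each $F_j$ is itself a slab of width $(w-1)+2(w-1)=3(w-1)$, hence correctable by hypothesis, and consecutive $F_j$'s are separated by the $(w-1)$-wide interior of the intervening thick slab, so another application of Lemma~\ref{lem:extension_of_correctable_region} shows $A := \bigcup_j F_j$ is correctable. A short bookkeeping check then gives $A \supseteq \partial M$: the external boundary $\partial_+ M$ equals $\bigcup_j G_j$ (every site of $G_j$ lies within distance $w-1$ of an adjacent thick slab, since $|G_j|=w-1$), while the internal boundary $\partial_- M$ consists of the $(w-1)$-wide edges of the $S_j$'s, each of which is contained by construction in $F_{j-1}$ or $F_j$. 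By the Expansion Lemma (Lemma~\ref{lem:commuting-projector-extend}), $M \cup A$ is correctable; but $M \cup A = \bigcup_j (S_j \cup G_j)$ exhausts the lattice, yielding the contradiction. The width $3(w-1)$ is tight in this scheme because the gap between consecutive thick slabs must hold at least $w-1$ sites (so no projector straddles) and the fat slab around such a gap must also absorb a $(w-1)$-wide buffer inside each neighboring thick slab; the only mild technical nuisance I anticipate is handling the lattice boundary when $L$ is not a multiple of $4(w-1)$, which is cleanly resolved by allowing smaller leftover slabs at the two ends.
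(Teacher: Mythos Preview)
Your argument is correct and uses the same two ingredients as the paper---Lemma~\ref{lem:extension_of_correctable_region} (separated correctable regions) and Lemma~\ref{lem:commuting-projector-extend} (Expansion Lemma)---but the overall architecture is different. The paper proceeds by \emph{iterative expansion of a single slab}: starting from one slab $M$ of width $3(w-1)$, it observes that $\partial M$ sits inside two width-$2(w-1)$ slabs $M_L,M_R$ which are $w-1$ apart and hence jointly correctable; the Expansion Lemma then promotes $M$ to a correctable slab of width $5(w-1)$, and by induction every slab of width $3(w-1)+2m(w-1)$ is correctable, eventually swallowing the lattice. Your approach is instead a \emph{one-shot covering}: tile the axis once with thick/thin slabs, show $M=\bigcup S_j$ and $A=\bigcup F_j\supseteq\partial M$ are each correctable by a global application of Lemma~\ref{lem:extension_of_correctable_region}, then invoke the Expansion Lemma a single time.

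Both arguments are short; the paper's inductive version is slightly more robust to boundary conditions. In your scheme the constraints ``gaps $\ge w-1$ for separation of the $S_j$'' and ``fat slabs $\le 3(w-1)$'' force every gap to be exactly $w-1$, and then ``consecutive $F_j$ separated by $\ge w-1$'' forces every interior thick slab to be exactly $3(w-1)$; so the divisibility nuisance is real. For open boundaries your suggested fix (shrink the terminal thick slab) works, since the end slab borders at most one $F_j$ and contributes nothing to $\partial_-M$ on its free side. For periodic boundaries there are no ``two ends,'' and a single undersized thick slab would let its two neighboring fat slabs come within fewer than $w-1$ sites of each other; one clean fix is to merge those two fat slabs with the small thick slab into a single slab of width at most $3(w-1)+2(w-1)$ and then appeal to the paper's inductive step once to see that this wider slab is still correctable---or simply note that the iterative argument handles the periodic case directly. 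This is a minor bookkeeping point, not a gap in the idea.
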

\begin{proof}
Suppose, contrary to the claim, that any slab of width $3(w-1)$ is correctable. Choose a correctable slab $M$ of width $3(w-1)$. The boundary $\partial M$ of $M$ is contained in two slabs $M_L$ and $M_R$, each of width $2(w-1)$. Hence $M_L$ and $M_R$ are both correctable, and since $M$ has width $3(w-1)$, $M_L$ and $M_R$ are separated by $w-1$. Therefore, no local projector acts on both $M_L$ and $M_R$, and by Lemma \ref{lem:extension_of_correctable_region}, $M_L \cup M_R \supseteq \partial M$ is correctable. Then Lemma \ref{lem:commuting-projector-extend} implies that the slab $M \cup M_L \cup M_R$ of width $5(w-1)$ is correctable. Repeating the argument, we see that if a slab $M$ of width $r$ is correctable, so is the slab of width $r+2(w-1)$ containing $M$.

If the system obeys open boundary conditions, then by induction the entire lattice is correctable. If the lattice is periodic, we may consider two thick correctable slabs $M_1, M_2$ such that $M_1 \cup M_2$ is the entire lattice and $\partial M_1 \subseteq M_2$; in that case Lemma \ref{lem:commuting-projector-extend} implies that the entire lattice $M_1 \cup M_2$ is correctable. For either type of boundary condition, then, there are no nontrivial logical operators at all. But we assumed that the code is nontrivial, and therefore reach a contradiction. 
\end{proof}

It follows from Lemma \ref{lem:slab} that the distance $d$ of a local commuting projector code satisfies
\[
d \le  3(w-1) L^{D-1}.
\] 
It was previously known that $d \le w L^{D-1}$ for a local stabilizer code  \cite{BravyiTerhal2008no-go,KayColbeck2008Quantum} and $d \le 3w L^{D-1}$ for a local subsystem code  \cite{BravyiTerhal2008no-go}. 

Now we may wonder about the geometry of a set that supports a nontrivial logical operator. For a subsystem code, there is a nontrivial logical operator supported by any noncorrectable set, but this statement is not true for general codes (see Appendix \ref{app:counter_example_nolop_noncorrectable}).
We say that an operator $\mathcal{O}$ is a \emph{logical} operator if it preserves the code space, and that it is a \emph{nontrivial} logical operator if it preserves the code space and its restriction to the code space is not proportional to the identity. 
From the definition of correctability, then, $M$ is not correctable if it supports a nontrivial logical operator. But for some codes the converse is false. If $M$ is not correctable, then an operator $\mathcal{O}$ exists that fails to satisfy eq.\eqref{eq:correctable}; however $\mathcal{O}$ might not preserve the code space.

But for a local commuting projector code, a correctable set can be extended to a slightly larger set that does support a nontrivial logical operator. Suppose the code is the simultaneous eigenspace with eigenvalue one of a set of commuting projectors $\mathcal{S}=\{\Pi_a\}$. For any set of qubits $M$, we define $M'$ as the support of all the projectors that act nontrivially on $M$. Then if $M$ is noncorrectable a nontrivial unitary logical operator is supported on $M'$.
\begin{lem}
\label{lem:projector-support}
\emph{(Support for nontrivial logical operator)}
For a commuting projector code, if the set $M$ is not correctable, then there is a nontrivial unitary logical operator supported on $M'$ that commutes with every projector in $\mathcal{S}$.
\end{lem}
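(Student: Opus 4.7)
The plan is to start from an operator $\mathcal{O}$ that witnesses the non-correctability of $M$, ``dress'' it by sandwiching with the product of the local projectors that act nontrivially on $M$, and then exponentiate the dressed operator to produce a unitary logical operator supported on $M'$.

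Since $M$ is not correctable, eq.\eqref{eq:correctable} fails for some operator supported on $M$; decomposing it into Hermitian and anti-Hermitian parts, at least one part has compression $\Pi\cdot\Pi$ that is not a scalar multiple of $\Pi$, so we may take the witness $\mathcal{O}$ to be Hermitian. Let $\mathcal{S}'_M\subseteq\mathcal{S}$ be the set of projectors that act nontrivially on $M$, and set $\Pi_{M'}=\prod_{\Pi_a\in\mathcal{S}'_M}\Pi_a$; this is supported on $M'$ by definition of $M'$. Define $\tilde{\mathcal{O}}=\Pi_{M'}\,\mathcal{O}\,\Pi_{M'}$, which is Hermitian and supported on $M'$.

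The central step is to verify $[\tilde{\mathcal{O}},\Pi_b]=0$ for every $\Pi_b\in\mathcal{S}$. If $\Pi_b\in\mathcal{S}'_M$, then because the projectors in $\mathcal{S}$ pairwise commute and $\Pi_b$ is idempotent, $\Pi_b$ is absorbed: $\Pi_b\Pi_{M'}=\Pi_{M'}\Pi_b=\Pi_{M'}$, so $\Pi_b\tilde{\mathcal{O}}=\tilde{\mathcal{O}}=\tilde{\mathcal{O}}\Pi_b$. If $\Pi_b\notin\mathcal{S}'_M$, then $\Pi_b$ is supported on $M^c$, so it commutes with $\mathcal{O}$, and it commutes with $\Pi_{M'}$ because the projectors in $\mathcal{S}$ pairwise commute, hence $\Pi_b$ commutes with $\tilde{\mathcal{O}}$. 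The same factorization $\Pi=\Pi_{M'}\prod_{\Pi_a\notin\mathcal{S}'_M}\Pi_a$ gives $\Pi_{M'}\Pi=\Pi$, so $\Pi\tilde{\mathcal{O}}\Pi=\Pi\mathcal{O}\Pi$, which by the choice of $\mathcal{O}$ is not a scalar multiple of $\Pi$.

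Finally, set $U=\exp(it\tilde{\mathcal{O}})$ for a real parameter $t$. Then $U$ is unitary, is supported on $M'$, and commutes with every $\Pi_b\in\mathcal{S}$ (and hence with $\Pi$), so it preserves the code space. Because $\tilde{\mathcal{O}}$ commutes with $\Pi$, its action on the code space is represented by $\Pi\tilde{\mathcal{O}}\Pi=\Pi\mathcal{O}\Pi$, a Hermitian operator with at least two distinct eigenvalues on the code space, so the restriction of $U$ to the code space equals $\exp(it\,\tilde{\mathcal{O}}|_{\mathrm{code}})$ and is not proportional to the identity for all but countably many real $t$; choose any such $t$. The only step that is not pure bookkeeping is the commutation check, but the absorption identity $\Pi_b\Pi_{M'}=\Pi_{M'}$ settles it immediately.
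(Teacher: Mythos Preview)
Your proof is correct and follows essentially the same route as the paper: sandwich a witness by $\Pi_{M'}$, verify commutation with every $\Pi_b\in\mathcal{S}$ via absorption/disjoint-support, and exponentiate. The only cosmetic difference is that the paper obtains a Hermitian witness by expanding in Pauli operators (which later feeds into the ``slightly entangling'' argument), whereas you take the Hermitian part directly; for this lemma either works.
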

\begin{proof}
Let $\Pi = \prod_{\Pi_a\in \mathcal{S}} \Pi_a$ be the projector onto the code space. We claim that there exists a Pauli operator $P_M$ supported on $M$ such that $\Pi P \Pi$ is not proportional to $\Pi$. Indeed, if $M$ is not correctable, then there exists an operator $\mathcal{O}_M$ supported on $M$ such that $\Pi \mathcal{O}_M \Pi \not\propto \Pi$. Expanding $\mathcal{O}_M = \sum_i c_i P^{(i)}_M $ as a linear combination of Pauli operators, we see that at least one Pauli operator $P^{(j)}_M$ must satisfy $\Pi P^{(j)}_M \Pi \not\propto \Pi$.

We denote by $\mathcal{S}'_M$ the set of projectors in $\mathcal{S}$ that act nontrivially on $M$, and 
define
\[
\Pi_{M'}= \prod_{\Pi_a\in \mathcal{S}'_M} \Pi_a.
\]
We claim that 
\[
\mathcal H = \Pi_{M'} P_M \Pi_{M'},
\]
is a nontrivial Hermitian logical operator supported on $M'$.

To see that $\mathcal{H}$ is a logical operator, note that if $\Pi_a \in \mathcal{S}'_M$,
then $\Pi_a \Pi_{M'} = \Pi_{M'} = \Pi_{M'}\Pi_a$, because $\Pi_a^2 = \Pi_a$; hence
\[
\Pi_a \mathcal H = \mathcal H = \mathcal H \Pi_a,
\]
\emph{i.e.}, $\Pi_a$ commutes with $\mathcal{H}$.
If $\Pi_a\not\in \mathcal{S}'_M$, then $\Pi_a$ is supported in the complement $M^c$ of $M$;
hence it commutes trivially with $P_M$, and therefore also with $\mathcal H$.
Since $\mathcal H$ commutes with each projector in $\mathcal{S}$,
it certainly commutes with $\Pi$ and hence preserves the code space. Furthermore, because
\[
\Pi \mathcal H \Pi = \Pi P_M \Pi,
\]
$\mathcal{H}$ acts on the code space in the same way as $\Pi P_M \Pi$, and therefore must be nontrivial.

Thus $U=\exp\left(-i\lambda \mathcal{H}\right)$ preserves the code space and is unitary for any real $\lambda$. Since $\mathcal{H}$, restricted to the code space, has at least two distinct eigenvalues, the same is true of $U$ for a generic choice of $\lambda$; \emph{i.e.}, $U$ is a nontrivial unitary logical operator.
\end{proof}
Lemmas \ref{lem:slab} and \ref{lem:projector-support} now imply:
\begin{theorem}
\emph{(A logical operator is supported on one thin slab)}
For a nontrivial local commuting projector code in $D\ge 1$ dimensions, with interaction range $w> 1$, there is a nontrivial unitary logical operator (commuting with all projectors) supported on a slab of width $5(w-1)$.
\label{thm:slab}
\end{theorem}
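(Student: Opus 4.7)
The plan is to chain together the two preceding lemmas with a straightforward geometric observation about how far the supports of projectors can extend beyond a given slab.

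First, I would invoke Lemma~\ref{lem:slab}, which guarantees the existence of a noncorrectable slab $M$ of width $3(w-1)$. Because $M$ is noncorrectable, Lemma~\ref{lem:projector-support} applies and produces a nontrivial unitary logical operator $U$, commuting with every projector in $\mathcal{S}$, whose support lies in $M'$ --- the union of the supports of all projectors acting nontrivially on $M$.

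The remaining step is purely geometric: I must bound the width of the slab containing $M'$. Since each projector is supported in a hypercube of linear size $w$, any projector that touches $M$ can extend at most $w-1$ sites beyond $M$ in the slab's transverse direction. The slab $M$ has width $3(w-1)$, so enlarging it by $w-1$ on each of the two transverse sides yields a slab of width $3(w-1)+2(w-1)=5(w-1)$ that contains $M'$. Hence the logical operator $U$ obtained above is supported on a slab of width $5(w-1)$, which is the desired conclusion.

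There is no real obstacle: the entire content of the theorem is the packaging of Lemmas~\ref{lem:slab} and \ref{lem:projector-support} together with the elementary observation about projector supports. The only point worth being careful about is verifying that the two transverse-direction extensions by $w-1$ really are the worst case --- i.e., that in the directions along which the slab already spans the full lattice there is nothing to check, because the slab geometry is preserved under thickening only in the bounded direction. Once this is spelled out, the proof is complete.
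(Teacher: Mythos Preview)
Your proposal is correct and matches the paper's approach exactly: the paper simply states that Lemmas~\ref{lem:slab} and~\ref{lem:projector-support} together imply the theorem, and your write-up spells out the same chaining plus the obvious geometric bound $3(w-1)+2(w-1)=5(w-1)$ on the width of $M'$.
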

\noindent Note that, though the proof of Theorem~\ref{thm:slab} establishes the existence of a logical operator supported on a slab of constant width, it provides no algorithm for constructing the operator.

In $D=2$ dimensions, the slab becomes a strip of constant width stretching across the $L\times L$ code block, and the logical operator supported on the strip may be called a ``string'' operator. It was previously known that for $D=2$ a string operator can be supported on a strip of width $w$ in a local stabilizer code \cite{BravyiTerhal2008no-go,KayColbeck2008Quantum}, and of width $3w$ in a local subsystem code \cite{BravyiTerhal2008no-go}.

\section{Two-dimensional local stabilizer codes are not partially self correcting}
Theorems 1 and 2 constrain the weight of logical operators, but the proofs tell us more --- they specify the \emph{geometry} of a region that supports a logical operator. This geometry has further implications for the physical robustness of quantum memories.

Consider a subsystem code whose stabilizer group $S$ has a set of geometrically local generators $\{S_a\}$, where the qubits reside at the sites of a $D$-dimensional hypercubic lattice with linear size $L$. The generating set $\{S_a\}$ might be overcomplete, but we assume that the number of generators acting nontrivially on each qubit is a constant independent of $L$. The local Hamiltonian
\begin{equation}\label{eq:Hamiltonian}
H= -\sum_a \frac{1}{2}\left( S_a - I\right),
\end{equation}
has a $2^{k+g}$-fold degenerate ground state with energy $E=0$, where $k$ is the number of protected qubits and $g$ is the number of gauge qubits of the subsystem code --- each ground state is a simultaneous eigenstate with eigenvalue one of all elements of $\{S_a\}$. If a quantum memory governed by this Hamiltonian is subjected to thermal noise, how well protected is the $2^k$-dimensional code space? 

If $|\psi\rangle$ is a zero-energy eigenstate of $H$ and $x\in P$, then $x|\psi\rangle$ is an eigenstate of $H$ with eigenvalue $E(x)$, where $E(x)$ is the number of elements of $\{S_a\}$ that anticommute with $x$. Thermal fluctuations may excite the memory, but excitations with energy cost $E$ are suppressed by the Boltzmann factor $e^{-E/\tau}$ where $\tau$ is the temperature (and Boltzmann's constant $k_B$ has been set to one). Following \cite{BravyiTerhal2008no-go}, we suppose that the environment applies a sequence of weight-one Pauli operators to the system, so that the error history after $t$ steps can be described as a walk on the Pauli group, starting at the identity:
\[
\{x_i\in P, i = 0,1,2,3, \dots t\},
\]
where $x_0= I$, and $x_{i+1}x_i^{-1}$ has weight one. Let $\mathcal{P}(z)$ denote the set of all such walks, with any number of steps, that start at $I$ and terminate at $z\in P$. 
We define
\[
\Delta(z) \equiv \min_{\gamma\in \mathcal{P}(z)}\max_{x\in \gamma} E(x),
\]
the minimum energy barrier that must be surmounted by any walk that reaches Pauli operator $z$. Thus such walks occur with a probability per unit time suppressed by the Boltzmann factor $e^{-\Delta(z)/\tau}$. We also define
\begin{align*}
\Delta_{\rm min}& \equiv \min_{x\in S^\perp\setminus G}  \Delta(x),\\
\Delta_{\rm max}& \equiv \max_{x\in S^\perp} \min_{y\in G} \Delta(xy).
\end{align*}
Here $\Delta_{\rm min}$ is the lowest energy barrier protecting any nontrivial dressed logical operator (representing a nontrivial coset of $S^\perp/G$), and $\Delta_{\rm max}$ is the highest such energy barrier.

We say that a quantum memory is \emph{self correcting} if $\Delta_{\rm min}$ grows faster than logarithmically with $L$. In that case \emph{all} nontrivial logical operators are suppressed by a Boltzmann factor whose reciprocal grows super-polynomially with $L$. We say that the quantum memory is \emph{partially self correcting} if $\Delta_{\rm max}$ grows faster than logarithmically with $L$. In that case \emph{at least one} logical operator is protected by an energy barrier that increases with system size. Though the Pauli walk may not be a particularly accurate description of noise in realistic systems,  it allows us to define the notion of barrier height precisely, and to state the criteria for self correction and partial self correction simply. Furthermore, we expect the Boltzmann factor $e^{-\Delta/\tau}$ suppressing the Pauli walk to provide a reasonable (though crude) estimate of the logical error rate for more realistic noise models, assuming that the system attains thermal equilibrium.

Bravyi and Terhal \cite{BravyiTerhal2008no-go}, and Kay and Colbeck \cite{KayColbeck2008Quantum}, showed that no two-dimensional local subsystem code with local stabilizer generators can be self correcting. On the other hand, partially self-correcting quantum memories are certainly possible in two dimensions --- the Ising model, regarded as a quantum repetition code, is an example. In the Ising model, the logical bit flip operator flips every qubit, hence $\tilde d = L^2$. In the Pauli walk that reaches the logical bit flip and traverses the lowest barrier, a domain wall of length $\Omega(L)$ sweeps across the system; hence $\Delta_{\rm max}= \Omega(L)$.   Theorem \ref{thm:subsystem-tradeoff} shows that this high value of $\tilde d$ for the logical bit flip is possible only because the code distance $d$ is $O(1)$, and hence  a logical phase flip can be realized by an operator of constant weight.  

But suppose that, as in the toric code \cite{Kitaev2003Fault-tolerant}, a logical phase flip can occur only if a thermally activated localized quasiparticle propagates across the system. Thus $\tilde d = \Omega(L)$ for the logical phase flip. Can the logical bit flip still be protected by a high barrier? Arguing as in \cite{BravyiTerhal2008no-go}, and invoking Theorem \ref{thm:subsystem-tradeoff}, we see that under this condition robust protection against bit flips cannot be achieved using a local subsystem code with local stabilizer generators.

\begin{theorem}\emph{(Limitation on partial self correction in local subsystem codes)}
For a two-dimensional local subsystem code, with qubits residing at sites of an $L\times L$ square lattice, suppose that $\{S_a\}$ is a (possibly overcomplete) set of \emph{geometrically local} stabilizer generators, where the number of generators acting on each qubit is an $L$-independent constant. Consider a quantum memory governed by the Hamiltonian eq.\eqref{eq:Hamiltonian}. If the code distance is $d=\Omega(L)$, then the memory is not partially self correcting --- \emph{i.e.}, $\Delta_{\rm max} = O(1)$. More generally, if the code distance is $d= \Omega(L^\alpha)$ in $D$ spatial dimensions, then $\Delta_{\rm max} = O(L^\beta)$, where $\beta= D-1 - \alpha/(D-1)$.
\label{thm:no-partial}
\end{theorem}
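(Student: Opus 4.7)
The plan is to combine the tradeoff theorem (Theorem~\ref{thm:subsystem-tradeoff}) with a recursive-sweep construction of Pauli walks, in the spirit of Bravyi--Terhal \cite{BravyiTerhal2008no-go}. Fix any nontrivial dressed logical operator $x\in S^\perp\setminus G$. By Theorem~\ref{thm:subsystem-tradeoff} and its proof, there is a gauge operator $y\in G$ such that $\tilde x=xy$ has weight $\tilde d=O(L^{\beta+1})$ and is supported on the set $M^c$ appearing in that proof, namely a regular grid of codimension-one slabs of width $w-1$ separating correctable hypercubes of linear size $l\sim d^{1/(D-1)}\sim L^{\alpha/(D-1)}$. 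Since $\min_{y\in G}\Delta(xy)\le\Delta(\tilde x)$, it suffices to exhibit a Pauli walk from $I$ to $\tilde x$ along which the energy is always $O(L^\beta)$; this bound then propagates to $\Delta_{\rm max}$.

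I build such a walk by a nested sweep of $\mathrm{supp}(\tilde x)$: advance along the first coordinate direction $\hat x_1$ at the outermost level; whenever the $\hat x_1$-front enters a slab $\Sigma$ of $M^c$ perpendicular to $\hat x_1$, pause and sub-sweep $\Sigma$ along $\hat x_2$; whenever a sub-front enters a slab perpendicular to $\hat x_2$ inside $\Sigma$, recurse in $\hat x_3$; and so on, up to a maximum recursion depth $D$. Because $\tilde x\in S^\perp$ commutes with every local stabilizer generator $S_a$, a partial product $\tilde x|_R$ anticommutes with $S_a$ iff $\mathrm{supp}(S_a)$ straddles the boundary of $R$ with nontrivial anticommuting overlap on both sides; in particular, generators supported entirely on one side of every active front commute with $\tilde x|_R$. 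Hence the number of violated generators at any instant is bounded, up to a constant depending on $w$ and $D$, by the total cross-sectional measure of the currently active fronts.

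At the outer (depth-$0$) front, the cross-section is $M^c\cap\{|x_1-t|\le w\}$; at a generic $t$ this consists of the $(D-1)$ families of slabs of $M^c$ parallel to $\hat x_1$, each contributing $(L/l)\cdot wL^{D-2}$ sites, for a total of $O(L^{D-1}/l)=O(L^\beta)$. A depth-$k$ front ($k\ge 1$) lies inside a codimension-$k$ intersection of perpendicular slabs, a $(D-k)$-dimensional region of transverse size $O(L)$ and $k$-fold thickness $O(w)$; its cross-section under the next sub-sweep is $O(w^kL^{D-1-k})=O(L^{D-1-k})$. The general bound $d\le L^{D-1}$ forces $\alpha\le D-1$, whence $\beta\ge D-2\ge D-1-k$ for every $k\ge 1$, so every depth contributes at most $O(L^\beta)$. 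Summing over the $O(D)$ simultaneously active fronts preserves the $O(L^\beta)$ bound.

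The main obstacle I expect is the geometric bookkeeping in this nested construction: one must verify that straddling generators at different active fronts can be charged to disjoint fronts (they can, because fronts at distinct recursion depths sit in disjoint sub-slabs of $M^c$) and that the ``corners'' where several perpendicular slabs meet are handled correctly by the descending recursion. Both follow from the strict $w$-localization of violated generators around each front (itself a consequence of $[\tilde x,S_a]=0$ implying $[\tilde x|_R,S_a]=0$ whenever $\mathrm{supp}(S_a)\subseteq R$), together with the regular grid geometry of $M^c$. Once this is in place, the theorem follows: $\Delta_{\rm max}=\max_{x\in S^\perp}\min_{y\in G}\Delta(xy)\le\max_x\Delta(\tilde x)=O(L^\beta)$, which specializes at $D=2$, $\alpha=1$ to $\beta=0$ and $\Delta_{\rm max}=O(1)$, ruling out partial self-correction for two-dimensional local stabilizer codes with $d=\Omega(L)$.
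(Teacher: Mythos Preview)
Your proposal is correct and follows essentially the same strategy as the paper: invoke Theorem~\ref{thm:subsystem-tradeoff} to confine an equivalent logical operator to the grid $M^c$ of thin slabs, then build a Pauli walk through that grid whose energy at every step is bounded by the cross-sectional measure of the active fronts. The paper's walk (for $D=2$) builds all horizontal strings column by column and then all vertical strings row by row, while your nested sweep interleaves the orientations; both give the same $O(L^\beta)$ bound, and your recursive construction is a clean way to make precise the paper's somewhat terse general-$D$ sweep.
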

\begin{proof}
Let $w$ be the interaction range of the gauge generators of the subsystem code and let $w_S$ be the interaction range of the stabilizer generators. 

For any dressed logical operator $x$ supported on this set, we may build a Pauli walk that starts at $I$ and ends at $x$ by first building the horizontal strings column by column and then building the vertical strings row by row. At each stage of this walk, any ``excited'' local stabilizer $S_a$ such that $S_a = -1$ acts only on qubits in a $w_S\times w_S$ square that contains qubits either at the boundary of the walk or in the intersection of a horizontal and vertical string. The number of such qubits is $O(1)$ and the total number of stabilizer generators acting on these qubits is $O(1)$. Therefore, the energy cost of the partially completed walk, and hence $\Delta_{\rm max}$, are $O(1)$.

In $D$ spatial dimensions, the proof of Theorem \ref{thm:subsystem-tradeoff} shows that the support of any dressed logical operator can be reduced to a network of overlapping $(D{-}1)$-dimensional slabs, where each slab has constant width and slabs with the same orientation are separated by distance $l$ such that $l^{D-1}=\Omega(d)$; hence $l=\Omega(L^{\alpha/(D-1)})$ if $d=\Omega(L^\alpha)$. For any dressed logical operator supported on this set of slabs, we may build a Pauli walk that sweeps across the system, such that at each stage of the walk the excited stabilizer generators are confined to a $(D{-}1)$-dimensional ``surface.'' This surface may be oriented such that it cuts across each slab on a $(D{-}2)$-dimensional surface with weight $O(L^{D-2})$. There are $O(L/l)$ such intersections; therefore during the walk the total number of excited stabilizer generators (and hence the energy cost) is $O((L/l)L^{D-2})=O(L^\beta)$, where $\beta = D-1 - \alpha/(D-1)$.
\end{proof}


We needed to assume that each $S_a$ is geometrically local to ensure that eq.\eqref{eq:Hamiltonian} is a geometrically local Hamiltonian. For any local subsystem code with geometrically local gauge generators, whether or not the stabilizer generators are also geometrically local,  the Hamiltonian \cite{Bacon2006Operator}
\[
H= -\sum_a \frac{1}{2}\lambda_a\left(G_a - I\right)
\]
is geometrically local, where now $\{G_a\}$ is the set of gauge generators. However, because the gauge generators are not mutually commuting, the energetics of a Pauli walk is not easy to study in this model, which is beyond the scope of Theorem \ref{thm:no-partial}.

\section{Are there self-correcting local commuting projector codes in two dimensions?}
For a two-dimensional local commuting projector code, the simultaneous eigenspace with eigenvalue one of the projectors $\{\Pi_a\}$, the code space is the degenerate ground state with energy $E=0$ of the Hamiltonian
\begin{equation}\label{eq:projector-hamiltonian}
H= -\sum_a\frac{1}{2}\left(\Pi_a - I\right).
\end{equation}
If only a constant number of projectors act on each qubit, then an operator supported on a set $M$ can increase the energy by at most $c|M|$, where $c$ is a constant. Since Theorem \ref{thm:slab} establishes the existence of a nontrivial logical operator supported on a narrow strip, one might anticipate that, by arguing as in the proof of Theorem \ref{thm:no-partial}, we can show that this system is not self correcting or partially self correcting.

We may envision a sequence of operations interpolating between the identity and a nontrivial logical operator, where each operation in the sequence could plausibly evolve from the previous operation due to the action of a thermal bath. In the strip $M$ of constant width that supports a nontrivial logical operator $\mathcal{O}$, we can divide the qubits into two subsets $A$ and $B=M\setminus A$, imagining that the interface between $A$ and $B$ gradually creeps along the strip. 

Now, however, we encounter an important distinction between stabilizer codes and more general commuting projector codes. For a stabilizer code, the nontrivial logical operator supported in $M$ can be chosen to be a Pauli operator, and hence the product of an operator supported in $A$ and an operator supported in $B$. For a commuting projector code, a logical operator supported in $M$ may actually be \emph{entangling} across the $A$-$B$ cut. Are we assured that this entangling operation can be built up gradually due to the effects of local noise?

We have not been able to settle this question. We {\em can} say that in any two-dimensional local commuting projector code there exists a nontrivial logical operator that is only \emph{slightly entangling} across any cut through the strip. This property, however, might not suffice to guarantee that the logical operator can be constructed as a product of physical operations, where each operation acts on a constant number of system qubits near the $A$-$B$ cut and also on a constant number of ancillary qubits in the ``environment.'' 

To define the notion of ``slightly entangling'' for an operator $\mathcal{O}$ supported on $AB$, we perform a Schmidt decomposition
\[
\mathcal{O}= \sum_\alpha \sqrt{\lambda_\alpha}~\mathcal{O}_A^\alpha\otimes \mathcal{O}_B^\alpha;
\]
here $\{\lambda_\alpha\}$ is a set of nonnegative real numbers, while $\{\mathcal{O}_A^\alpha\}$ is a set of operators supported on $A$ and $\{\mathcal{O}_B^\alpha\}$ is a set of operators supported on $B$, with the normalization conditions
\begin{align*}
&{\rm tr}\left(\mathcal{O}_A^{\alpha\dagger} \mathcal{O}_A^\beta\right)=2^{|A|}~\delta^{\alpha\beta},\nonumber\\
&{\rm tr}\left(\mathcal{O}_B^{\alpha\dagger} \mathcal{O}_B^\beta\right)=2^{|B|}~\delta^{\alpha\beta}.
\end{align*}
The number of nonzero terms in the Schmidt decomposition is the Schmidt rank of $\mathcal{O}$, and we say that $\mathcal{O}$ is slightly entangling if its Schmidt rank is a constant independent of system size. 

As we know from Theorem \ref{thm:slab}, for a two-dimensional local commuting projector code on an $L\times L$ lattice, there is a nontrivial logical operator supported on a vertical strip $M$ with dimensions $r\times L$, where $r$ is a constant. $M$ can be regarded as the disjoint union of an $r\times h$ rectangle $A$ covering the bottom of $M$ and an $r\times (L-h)$ rectangle $B$ covering the top of $M$. We can prove
\begin{lem}
\label{lem:e_slightly_entangling_lop}
\emph{(Existence of slightly entangling logical operators)}
For a nontrivial two-dimensional local commuting projector code, there is a nontrivial Hermitian logical operator $\mathcal{H}$ supported on a strip of constant width $M'$ such that, for any division of $M'$ into constant-width rectangles $A$ and $B$, $\mathcal{H}$ is slightly entangling across the $A$-$B$ cut. 
\end{lem}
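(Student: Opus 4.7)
The plan is to show that the same explicit operator produced in Lemma~\ref{lem:projector-support} already has $O(1)$ Schmidt rank across any rectangular cut, once the strip is chosen via Lemma~\ref{lem:slab}. The whole argument is a Schmidt-rank count that rests on the locality and commutativity of the defining projectors.

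First, I apply Lemma~\ref{lem:slab} to obtain a noncorrectable slab $M$ of width $3(w-1)$, and then apply Lemma~\ref{lem:projector-support} to produce a nontrivial Hermitian logical operator $\mathcal{H}=\Pi_{M'}P_M\Pi_{M'}$ supported on a strip $M'$ of constant width at most $5(w-1)$; here $P_M$ is a Pauli operator on $M$ and $\Pi_{M'}=\prod_{\Pi_a\in\mathcal{S}'_M}\Pi_a$. This single operator $\mathcal{H}$, whose definition makes no reference to any later cut, will serve as the candidate for every constant-width $A|B$ division of $M'$.

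Now fix any division of $M'$ into constant-width rectangles $A$ and $B$. Because all projectors in $\mathcal{S}$ commute, I can partition $\mathcal{S}'_M$ into three pairwise-commuting groups and write $\Pi_{M'}=\Pi^A\Pi^B\Pi^{\mathrm{cross}}$, where $\Pi^A$ (respectively $\Pi^B$) collects projectors supported in $A$ (respectively $B$), and $\Pi^{\mathrm{cross}}$ collects those whose $w\times w$ footprint meets both sides. The geometric point is that any cross-boundary projector must sit within distance $w$ of the cut line, so since $M'$ has constant width the support of $\Pi^{\mathrm{cross}}$ lies in a rectangle of constant area $O(w^2)$ and in particular involves only $O(1)$ qubits. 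Viewing operators on $M'$ as elements of $\mathcal{L}(H_A)\otimes\mathcal{L}(H_B)$ with the Hilbert--Schmidt inner product used in the paper, two standard facts suffice: an operator supported on $k$ qubits has Schmidt rank at most $4^k$ across any cut through it, and the Schmidt rank is submultiplicative under operator products, $\mathrm{SR}(XY)\le \mathrm{SR}(X)\mathrm{SR}(Y)$. Then $\mathrm{SR}(\Pi^A\Pi^B)=1$ because $\Pi^A\Pi^B$ factors across the cut, $\mathrm{SR}(\Pi^{\mathrm{cross}})=O(1)$ by the geometric bound, and $\mathrm{SR}(P_M)=1$ since Paulis factor. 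Combining,
\[
\mathrm{SR}(\mathcal{H})\le \mathrm{SR}(\Pi_{M'})^{2}\,\mathrm{SR}(P_M) \le \mathrm{SR}(\Pi^{\mathrm{cross}})^{2}=O(1),
\]
independently of $L$. Since the same $\mathcal{H}$ serves every admissible cut, the lemma follows.

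The only real obstacle I anticipate is the bookkeeping surrounding the cross-boundary projectors: one must verify that the commutativity of $\mathcal{S}$ really does allow the clean sort of $\mathcal{S}'_M$ into three mutually commuting groups, and that the $O(1)$ bound on the support of $\Pi^{\mathrm{cross}}$ holds uniformly in the location of the cut. Both facts are immediate once the geometric picture is fixed by the constant strip width and the constant interaction range. The conceptually deeper difficulty, which the surrounding text flags as unresolved, is not in this lemma at all but in whether slight entanglement is enough to let a thermal environment construct $\mathcal{H}$ through a sequence of plausibly local noise steps.
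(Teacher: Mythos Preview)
Your proposal is correct and follows essentially the same approach as the paper: both start from the operator $\mathcal{H}=\Pi_{M'}P_M\Pi_{M'}$ furnished by Lemmas~\ref{lem:slab} and~\ref{lem:projector-support}, observe that $P_M$ has Schmidt rank one, and then argue that only the $O(1)$ projectors straddling the cut can raise the Schmidt rank, each by at most a constant factor. Your version is slightly more explicit in packaging the cross-boundary projectors into a single $\Pi^{\mathrm{cross}}$ and invoking submultiplicativity of the operator Schmidt rank, whereas the paper phrases the same count more informally, but the content is identical.
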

\begin{proof}
We know from Lemmas \ref{lem:slab} and \ref{lem:projector-support} that there is a noncorrectable constant-width strip $M$ and a Pauli operator $P_M$ supported on $M$ such that 
\[
\mathcal H = \Pi_{M'} P_M \Pi_{M'},
\]
is a nontrivial Hermitian logical operator supported on $M'$; here $\Pi_{M'}= \prod_{\Pi_a\in \mathcal{S}'_M} \Pi_a$ and $\mathcal{S}'_M$ is the set of projectors that act nontrivially on $M$.
The Pauli operator $P_M$ is a product operator, with Schmidt number one across the $A$-$B$ cut. Among the local projectors occurring in the product $\Pi_{M'}$, those fully supported on either $A$ or $B$ have no effect on the Schmidt number of $\Pi_{M'} P_M \Pi_{M'}$, and only a constant number of the projectors act nontrivially on both $A$ and $B$. Since each such $\Pi_a$ is supported on a constant number of qubits, the action of $\Pi_a$ increases the Schmidt number by a constant.
Thus $\mathcal{H}$ has constant Schmidt number, \emph{i.e.} is slightly entangling.
\end{proof}

We may relax the notion of slightly entangling, regarding an operator $\mathcal{O}$ as slightly entangling if it may be {\em well approximated} by an operator with constant Schmidt rank. In this sense the unitary logical operator $U = \exp( i \lambda \mathcal H)$ is also slightly entangling. We may expand the exponential as a power series where each term has a Schmidt rank independent of system size; furthermore, the power series expansion truncated at constant order approximates the exponential function very well with respect to the operator norm.

Now we might hope to construct a slightly entangling logical operator $\mathcal{O}$, supported on a constant-width vertical strip, by gradually building its support one horizontal row of qubits at a time. However, Lamata {\em et al.} \cite{Lamata2008Sequential} showed that, if $\mathcal{O}$ is entangling, then it cannot be obtained as a product of \emph{unitary} operators where each of these operators acts on just a few rows of system qubits and on a shared ancillary system. 

An alternative procedure for gradually building a nontrivial logical error has been proposed by Landon-Cardinal and Poulin \cite{Poulin2012Unpublished}. They envision a walk along the strip such that, in each step of the walk, first a constant size set of qubits depolarizes, and then the code projectors acting on that set are applied. If the projection fails to accept the state, the step can be repeated until the projection succeeds. 

This procedure could fail if at some stage the projection succeeds with zero probability. But Landon-Cardinal and Poulin \cite{Poulin2012Unpublished} have shown that their procedure eventually generates a nontrivial logical error (and hence that the code is not self correcting) for any local commuting projector code obeying a ``local topological order'' criterion \cite{Hastings2010Short}. Whether self-correcting two-dimensional local commuting projector codes are possible remains open, though, because topologically ordered codes that violate {\em local} topological order have not been ruled out. 

%
%

\section{Conclusion}
The quantum accuracy threshold theorem \cite{Gottesman2009Introduction} shows that quantum information can be reliably stored and processed by a noisy physical system if the noise is not too strong. But can quantum information be protected ``passively'' in a macroscopic physical system governed by a static local Hamiltonian, at a sufficiently low nonzero temperature? This question \cite{DennisKitaevLandahlEtAl2002Topological,Bacon2006Operator}, aside from its far-reaching potential implications for future quantum technologies, is also a fundamental issue in quantum many-body physics. Hamiltonians derived from local quantum codes, whose properties are relatively easy to discern, can provide us with valuable insights.

A two-dimensional ferromagnet can be a self-correcting classical memory, but a Hamiltonian based on a two-dimensional local subsystem code with local stabilizer generators cannot be a self-correcting quantum memory \cite{BravyiTerhal2008no-go,KayColbeck2008Quantum}.
We have shown that for two-dimensional local subsystem code with local stabilizer generators on an $L \times L$ square lattice,
robust \emph{classical} protection is impossible if the code distance is $d=\Omega(L)$, as expected for a topologically ordered two-dimensional system.
More generally, we have studied how the code distance $d$ limits the size of the support of arbitrary nontrivial logical operators, in both local subsystem codes and local commuting projector codes. In view of the upper bound $d=O(L^{D-1})$ on the code distance, we may write $d= \Theta (L^{(D-1)(1 - \delta)})$ where $0\le \delta \le 1$, and thus our upper bound eq.\eqref{eq:main-result} on the weight of logical operators becomes
\[
\tilde d = O(L^{D-1+\delta}).
\]
In particular, in three dimensions, $d=\Omega(L)$ implies $\tilde d=O(L^{5/2})$. We have also shown that any two-dimensional local commuting projector code admits a nontrivial logical string operator which is only slightly entangling across any cut through the string. 

Our arguments modestly extend the findings of \cite{BravyiPoulinTerhal2010Tradeoffs,BravyiTerhal2008no-go,Bravyi2010Subsystem,KayColbeck2008Quantum}, and use similar ideas. In passing, we also proved a Cleaning Lemma for subsystem codes
based on ideas from \cite{YoshidaChuang2010Framework}, proved a Cleaning Lemma for local commuting projector codes.
Our methods might find further applications in future studies of quantum memories based on local codes. 

\acknowledgments
We are grateful to Salman Beigi, Alexei Kitaev, Robert K\"onig, Olivier Landon-Cardinal, and Norbert Schuch for helpful discussions, and we especially thank David Poulin for useful comments on the manuscript. 
This research was supported in part by NSF under Grant No. PHY-0803371, by DOE under Grant No. DE-FG03-92-ER40701, by NSA/ARO under Grant No. W911NF-09-1-0442, and by the Korea Foundation for Advanced Studies. The Institute for Quantum Information and Matter (IQIM) is an NSF Physics Frontiers Center with support from the Gordon and Betty Moore Foundation. 

\appendix
\section{Holographic lemma for local stabilizer codes}
\label{app:holographic_lemma_stabilizer_codes}

We say that a local stabilizer code has interaction range $w$ if each stabilizer generator has support on a hypercube containing $w^D$ sites. For this case, we can improve the criterion for correctability of a hypercube, found for local subsystem codes in Lemma \ref{lem:subsystem-hypercube}.

\begin{lem}\emph{(Expansion Lemma for local stabilizer codes)}
For a local stabilizer code, suppose that $\partial_+M$, $A$, and $M\setminus A$  are all correctable, where $\partial_- M \subseteq A \subseteq M$. Then $M$ is also correctable. 
\end{lem}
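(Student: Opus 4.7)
The plan is to show every logical class of $\mathcal{C}$ has a representative supported on $M^c$, whence $g(M^c) = 2k$ and $g(M) = 0$ by the Cleaning Lemma for stabilizer codes, so $M$ is correctable.

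First, using the correctability of $A$, the Cleaning Lemma provides, for each logical class $[L]$, a representative $L_2 \in (S^\perp)_{A^c}$ where $A^c = M^c \cup (M\setminus A)$. Decompose $L_2 = (L_2)_{M\setminus A}\cdot (L_2)_{M^c}$ as the product of its restrictions to the disjoint regions. I would then argue that $(L_2)_{M\setminus A}$ is itself a stabilizer, so that $L_2 \cdot (L_2)_{M\setminus A} = (L_2)_{M^c}$ lies in $(S^\perp)_{M^c}$ and represents $[L]$, completing the argument.

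The core step is verifying $(L_2)_{M\setminus A} \in (S^\perp)_{M\setminus A}$. The geometric key is $\partial_- M \subseteq A$, which forces $M\setminus A \subseteq M^\circ$: by definition of $M^\circ$, any stabilizer generator $S_a$ whose support meets $M\setminus A$ is entirely supported in $M$ (it cannot reach $M^c$). For such $S_a$, the relation $[L_2,S_a] = 0$ (from $L_2 \in S^\perp$) combined with $(L_2)_A = I$ (since $L_2 \in P_{A^c}$) yields $[(L_2)_{M\setminus A},(S_a)_{M\setminus A}] = 0$; since $(S_a)_A$ is disjoint from the support of $(L_2)_{M\setminus A}$, this also gives $[(L_2)_{M\setminus A},S_a] = 0$. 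Generators disjoint from $M\setminus A$ commute with $(L_2)_{M\setminus A}$ trivially. Hence $(L_2)_{M\setminus A}\in (S^\perp)_{M\setminus A}$, and correctability of $M\setminus A$ forces $(L_2)_{M\setminus A}\in S_{M\setminus A}\subseteq S$.

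The main delicacy will be the case analysis showing $(L_2)_{M\setminus A}$ commutes with every stabilizer generator, where the condition $\partial_-M\subseteq A$ plays the essential role of ruling out generators bridging $M\setminus A$ and $M^c$. Interestingly, the correctability of $\partial_+M$ does not seem to enter this particular chain of reasoning; the authors may instead use a proof strategy patterned on Lemma~\ref{lem:subsystem-extend}, building an auxiliary subsystem code $\mathcal{C}_{M^c}$ with gauge group $\Pi_{M^c}(S)$ and showing that $\partial_+ M$ is correctable for $\mathcal{C}_{M^c}$ via an extension argument, in which case the correctability of $\partial_+M$ for $\mathcal{C}$ enters naturally as the hypothesis that kills the extended dressed logical on $\partial_+M$.
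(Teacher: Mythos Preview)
Your proof is correct and takes a genuinely different route from the paper. The paper argues by contradiction: assuming a nontrivial logical operator $x$ is supported on $M$, it cleans $A$ via a stabilizer element $y$ that is chosen to be a product of local generators supported in $M' = M \cup \partial_+ M$ (generators disjoint from $M$ can be dropped from the product without affecting the action on $A\subseteq M$). This leaves $xy$ supported on $(M\setminus A)\cup \partial_+ M$; the geometric condition $\partial_- M \subseteq A$ guarantees no generator touches both pieces, so each factor lies in $S^\perp$, and correctability of \emph{both} $M\setminus A$ and $\partial_+ M$ then forces each factor into $S$, yielding the contradiction.

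Your direct argument avoids the restriction to generators in $M'$: you clean $A$ for an arbitrary logical class and let the residual operator land on $A^c = (M\setminus A)\cup M^c$. The same geometric separation shows the $(M\setminus A)$-factor is in $S^\perp$, and correctability of $M\setminus A$ alone pushes it into $S$, leaving a representative on $M^c$. Your observation that the correctability of $\partial_+ M$ is never invoked is accurate and is a real sharpening: the lemma holds with that hypothesis removed. The paper's extra hypothesis is an artifact of its choice to confine the cleaning stabilizer to $M'$, which then forces it to deal with spillover onto $\partial_+ M$; your approach simply lets the spillover go to $M^c$, where it does no harm.
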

\begin{proof}
Suppose, contrary to the claim, that there is a nontrivial logical operator $x$ supported on $M$. Then, because $A$ is correctable, Lemma \ref{lem:clean-region} implies that there is a stabilizer generator $y$ such that $xy$ acts trivially on $A$. Furthermore, $y$ can be expressed as a product of local stabilizer generators, each supported on $M'=M\cup\partial_+M$. Thus $xy$ is a product of two factors, one supported on $M\setminus A$ and the other supported on $\partial_+M$. Because $\partial_-M\subseteq A$, no local stabilizer generator acts nontrivially on both $M\setminus A$ and $\partial_+M$; therefore, each factor commutes with all stabilizer generators and hence is a logical operator. Because $M\setminus A$ and $\partial_+M$ are both correctable, each factor is a trivial logical operator and therefore $xy$ is also trivial. It follows that $x$ is trivial, a contradiction. 
\end{proof}

Now, if the interaction range is $w$ and $M$ is a hypercube with linear size $l$, we choose $A$ so that $M\setminus A$ is a hypercube with linear size $l-2(w-1)$, and we notice that $\partial_+M$ is contained in a hypercube with linear size $l+2(w-1)$. Thus both $M\setminus A$ and $\partial_+ M$ are correctable provided that
\begin{align*}
|\partial_+ M| &\le \left[l +2(w-1)\right]^D - l^D \nonumber\\
&\le 2(w-1)D\left[l +2(w-1)\right]^{D-1} < d.
\end{align*}
Reasoning as in the proof of Lemma \ref{lem:subsystem-hypercube}, we conclude that:

\begin{lem}\emph{(Holographic Principle for local stabilizer codes)}
\label{lem:stabilizer-hypercube}
For a $D$-dimensional local stabilizer code with interaction range $w>1$ and distance $d>1 $, a hypercube with linear size $l$ is correctable if 
\begin{equation}\label{eq:hypercube-size-stabilizer}
2(w-1)D\left[l+ 2(w-1)\right]^{D-1} < d.
\end{equation}
\end{lem}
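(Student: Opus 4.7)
The plan is to mirror the inductive construction used for the subsystem case (Lemma \ref{lem:subsystem-hypercube}), but now invoking the stronger stabilizer Expansion Lemma just established. The base case is a single qubit, which is correctable because $d>1$, and I grow the linear size of a concentric correctable hypercube by $2(w-1)$ at each step, continuing as long as the stated inequality holds.

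For the inductive step, suppose that a hypercube $M_0$ of linear size $l-2(w-1)$ is correctable, and let $M$ be the concentric hypercube of linear size $l$. I set $A = M \setminus M_0$, the shell of thickness $w-1$ around $M_0$, so that $M \setminus A = M_0$ is correctable by the inductive hypothesis. I would check the geometric inclusion $\partial_- M \subseteq A$: any stabilizer generator, being supported on a hypercube of linear side $w$, that acts both on a site of $M_0$ and on a site of $M^c$ would have to span more than $w$ sites in one coordinate, which is impossible. Hence every site of $M$ that lies in the support of a generator reaching outside $M$ must lie within the shell $A$. To secure correctability of $A$ and $\partial_+ M$, I bound
\[
|A| \le l^D - [l-2(w-1)]^D \le 2(w-1)D\,l^{D-1},
\]
and, using that $\partial_+ M$ lies in the shell between $M$ and the concentric hypercube of linear size $l+2(w-1)$,
\[
|\partial_+ M| \le [l+2(w-1)]^D - l^D \le 2(w-1)D[l+2(w-1)]^{D-1}.
\]
Under the hypothesis $2(w-1)D[l+2(w-1)]^{D-1} < d$, both of these sizes are strictly below the code distance, and any set of fewer than $d$ qubits is correctable. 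Applying the stabilizer Expansion Lemma to the triple $(\partial_+M,\, A,\, M\setminus A)$ then yields the correctability of $M$, closing the induction.

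Iterating generates a sequence of correctable concentric hypercubes of linear sizes $1,\,1+2(w-1),\,1+4(w-1),\dots$, and any hypercube of linear size $l$ satisfying the stated bound is contained in the largest one so constructed, inheriting correctability as a subset. The main obstacle is simply the bookkeeping: verifying the containment $\partial_- M \subseteq A$ with the correct shell thickness, and correctly situating $\partial_+ M$ inside an enlarged hypercube of linear size $l+2(w-1)$ so that the factor $[l+2(w-1)]^{D-1}$ (rather than $l^{D-1}$ or $[l-2(w-1)]^{D-1}$) governs the bound. Both points rest on the geometric fact that a local generator cannot straddle regions separated by more than $w-1$ sites, which is precisely what makes the improved stabilizer Expansion Lemma applicable and allows the coefficient $4(w-1)D$ of the subsystem case to be sharpened to $2(w-1)D$ here.
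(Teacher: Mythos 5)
Your proof is correct and follows essentially the same route as the paper: grow a concentric sequence of correctable hypercubes in increments of $2(w-1)$ by iterating the stabilizer Expansion Lemma, with $A$ the thickness-$(w-1)$ shell and the size of $\partial_+M$ controlled by the hypercube of linear size $l+2(w-1)$. You are merely more explicit than the paper about verifying $\partial_- M\subseteq A$ and bounding $|A|$ separately, details the paper leaves implicit by appealing to ``reasoning as in the proof of Lemma~\ref{lem:subsystem-hypercube}.''
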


\noindent To ensure that the hypercube $M$ is correctable, it suffices for its $(w-1)$-thickened boundary, rather than its $\left[2(w-1)\right]$-thickened boundary, to be smaller than the code distance. 

\section{A noncorrectable set that supports no nontrivial logical operator}
\label{app:counter_example_nolop_noncorrectable}
Here we give a simple example illustrating that for some quantum codes a noncorrectable set need not support a nontrivial logical operator. For $n=2$ qubits, consider the three-dimensional code space spanned by the orthogonal vectors
\begin{align*}
&|\phi\rangle =\frac{1}{\sqrt{2}}\left(|00\rangle + |11\rangle\right),\\
&|\psi\rangle =|01\rangle,\\
&|\chi\rangle =|10\rangle;
\end{align*}
this is the eigenspace with eigenvalue 1 of the projector
\[
\Pi = |\phi\rangle\langle \phi | + |\psi\rangle\langle \psi| +|\chi\rangle\langle \chi|.
\]
If the first qubit is mapped to $|0\rangle$, then $|\phi\rangle$ is no longer perfectly distinguishable from $|\psi\rangle$ or $|\chi\rangle$; hence erasure of this qubit is not correctable. (Similarly, the second qubit is also a noncorrectable set.)

Is there a logical operator supported on the first qubit? Suppose that 
\[
L = \begin{pmatrix} a & b \\ c & d \end{pmatrix}
\]
is an operator acting on the first qubit. Then $L\otimes I |\psi\rangle = a|01\rangle + c |11\rangle$ is a code vector only if $c=0$, and  $L\otimes I |\chi\rangle = b|00\rangle + d|10\rangle$ is a code vector only if $b=0$. Furthermore, if $b=c=0$, then $L\otimes I |\phi\rangle = \left(a|00\rangle + d|11\rangle\right)/\sqrt{2}$ is a code vector only if $a=d$. Thus $L$ is a multiple of the identity, a trivial operator.

\end{document}